\newtheorem{theorem}{Theorem}[section]
\newtheorem{lemma}[theorem]{Lemma}
\newtheorem{prop}[theorem]{Proposition}
\newtheorem{cor}[theorem]{Corollary}
\newtheorem{definition}[theorem]{Definition}
\newtheorem{remark}[theorem]{Remark}
\numberwithin{table}{section}
\numberwithin{equation}{section}
\numberwithin{figure}{section}
\numberwithin{equation}{section}
\numberwithin{figure}{section}
\newcommand{\E}{\mathbb{E}}
\newcommand{\R}{\mathbb{R}}
\newcommand{\BS}{\rm BS}
\newcommand{\p}{\partial}
\newcommand{\beas}{\begin{eqnarray*}}
\newcommand{\eeas}{\end{eqnarray*}}
\newcommand{\bal}{\begin{align}}
\newcommand{\eal}{\end{align}}
\newcommand{\bas}{\begin{align*}}
\newcommand{\eas}{\end{align*}}
\newcommand{\bea}{\begin{eqnarray}}
\newcommand{\eea}{\end{eqnarray}}
\newcommand{\tmop}{\end{eqnarray}}
\newcommand{\ben}{\begin{enumerate}}
\newcommand{\een}{\end{enumerate}}
\newcommand{\ui}{\mathrm{i}}
\newcommand{\dm}{\diamond}
\newcommand{\cR}{\mathcal{R}}
\newcommand{\cS}{\mathcal{S}}
\newcommand{\cF}{\mathcal{F}}
\newcommand{\mF}{\mathbb{F}}
\newcommand{\mT}{\mathbb{T}}
\newcommand{\tmF}{\tilde {\mathbb{F}}}
\newcommand{\mP}{\mathbb{P}}
\newcommand{\mQ}{\mathbb{Q}}
\newcommand{\IF}{It\^o's Formula}
\newcommand{\cO}{\mathcal{O}}
\newcommand{\bi}{\begin{itemize}}
\newcommand{\ei}{\end{itemize}}
\newcommand{\beq}{\begin{equation}}
\newcommand{\eeq}{\end{equation}}
\newcommand{\eef}[1]{\ensuremath{\mathbb{E}\left[\left.{#1}\right|\cF_t\right]}}
\newcommand{\eetm}[2]{\mathbb{E}_t^{#2}\left[#1\right]}
\newcommand{\eet}[1]{\mathbb{E}_{t}\left[#1\right]}
\newcommand{\angl}[1]{\left\langle{#1}\right\rangle}
\newcommand{\tkz}{\tikzexternaldisable}
    \pgfmathsetlength{\pgf@xb}{\pgfkeysvalueof{/pgf/outer xsep}}%
    \pgfmathsetlength{\pgf@yb}{\pgfkeysvalueof{/pgf/outer ysep}}%
\def\S{\tikz[baseline=-2.8,scale=0.15]{\node[T1] {};}} 
\def\U{\tikz[baseline=-2.8,scale=0.15]{\node[T2] {};}} 
\def\X{\tikz[baseline=-2.8,scale=0.15]{\node[X] {};}} 
\def\M{\tikz[baseline=-2.8,scale=0.15]{\node[M] {};}} 
\def\XXd{\tikz[baseline=-1,scale=0.15]{\draw (-1,1) node[X] {} -- (0,0) node[not] {} -- (1,1) node[X] {};}} 
\def\MMd{\tikz[baseline=-1,scale=0.15]{\draw (-1,1) node[M] {} -- (0,0) node[not] {} -- (1,1) node[M] {};}} 
\def\T1T2d{\tikz[baseline=-1,scale=0.15]{\draw (-1,1) node[T1] {} -- (0,0) node[not] {} -- (1,1) node[T2] {};}} %
\def\MXd{\tikz[baseline=-1,scale=0.15]{\draw (-1,1) node[M] {} -- (0,0) node[not] {} -- (1,1) node[X] {};}}
\def\MXdXd{\tikz[baseline=-1,scale=0.15]{
\draw (0,0) node[not] {} -- (-1,1) node[not] {}
-- (-2,2) node[M]{} ;
\draw (0,0) -- (1,1) node[X] {};
\draw (-1,1) -- (0,2) node[X] {};
}}
\def\XMdMd{\tikz[baseline=-1,scale=0.15]{
\draw (0,0) node[not] {} -- (-1,1) node[not] {}
-- (-2,2) node[X]{} ;
\draw (0,0) -- (1,1) node[M] {};
\draw (-1,1) -- (0,2) node[M] {};
}}
\def\MMdXd{\tikz[baseline=-1,scale=0.15]{
\draw (0,0) node[not] {} -- (-1,1) node[not] {}
-- (-2,2) node[M]{} ;
\draw (0,0) -- (1,1) node[X] {};
\draw (-1,1) -- (0,2) node[M] {};
}}
\def\MXdXdXd{\tikz[baseline=-1,scale=0.15]{
\draw (0,0) node[not] {} -- (-1,1) node[not] {}
-- (-2,2) node[not]{}  -- (-3,3) node[M]{};
\draw (0,0) -- (1,1) node[X] {};
\draw (-1,1) -- (0,2) node[X] {};
\draw (-2,2) -- (-1,3) node[X] {};
}}
\colorlet{symbols}{blue!90!black}
\colorlet{testcolor}{green!60!black}
\colorlet{connection}{red!30!black}
\tikzset{
root/.style={circle,fill=black!50,inner sep=0pt, minimum size=3mm},
        dot/.style={circle,fill=black,inner sep=0pt, minimum size=1.5mm},
        T1/.style={very thin,circle,fill=BrickRed!100,draw=black,inner sep=0pt,minimum size=1.2mm},         		T2/.style={very thin,circle,fill=BurntOrange!100,draw=black,inner sep=0pt,minimum size=1.2mm},         	A/.style={very thin,circle,fill=BlueGreen!100,draw=black,inner sep=0pt,minimum size=1.2mm},         		X/.style={very thin,circle,fill=Gray!80,draw=black,inner sep=0pt,minimum size=1.2mm},         			Z/.style={very thin,circle,fill=RubineRed!100,draw=black,inner sep=0pt,minimum size=1.2mm},         		M/.style={very thin,circle,fill=YellowOrange!100,draw=black,inner sep=0pt,minimum size=1.2mm},         
	not/.style={thin,circle,fill=symbols,draw=connection,fill=connection,inner sep=0pt,minimum size=0.35mm},
	>=stealth,
        }
\begin{document}

\title{Computing the SSR}

\author{Peter K. Friz, TU and WIAS Berlin,\\ {\tt friz@math.tu-berlin.edu}\\
~\\
Jim Gatheral, Baruch College, CUNY,\\ {\tt jim.gatheral@baruch.cuny.edu}}


\maketitle\thispagestyle{empty}

\begin{abstract}
The skew-stickiness-ratio (SSR), examined in detail by Bergomi in his book, is critically important to options traders, especially market makers.  We present a  model-free expression for the  SSR in terms of the characteristic function.  In the diffusion setting, it is well-known that the short-term limit of the SSR is 2; a corollary of our results is that this limit is $H+3/2$ where $H$ is the Hurst exponent of the volatility process.  The general formula for the SSR simplifies and becomes particularly tractable in the affine forward variance case. We explain the qualitative behavior of the SSR with respect to the shape of the  forward variance curve, and thus also path-dependence of the SSR.

\end{abstract}


\section{The skew-stickiness-ratio (SSR)}\label{sec:intro}

Consider a European call option expiring at $T$ with strike $K$, valued as of time $t$.  
Denote the Black-Scholes implied volatility of this option by $\sigma_{\BS,t}(k,T)$ where $k = \log K/S$ is the log-strike, $S=S_t$ the spot. More explicitly, 
$\sigma_{\BS}=\sigma_{\BS,t}(k,T)$ is the unique solution of 
\[ \eet{\left( \frac{S_T}{S_t} - e^k \right)^+ }  
   = \mathrm{BS} (k, (T -t) \sigma_{\BS}^2), \]
where the Black-Scholes call price function is given by
\[ \mathrm{BS} (k, w) = \left\{\begin{array}{ll}
     \Phi \left( - \frac{k}{\sqrt{w}} + \frac{\sqrt{w}}{2} \right) - e^k \Phi
     \left( - \frac{k}{\sqrt{w}} - \frac{\sqrt{w}}{2} \right) & \text{if } w >
     0\\
     (1 - e^k)^+ & \text{if } w = 0,
   \end{array}\right. \]
and $\Phi$ denotes the standard normal distribution function. Equivalently,
$$
    C_t (K,T) = 
   \eet{ \left( {S_T} - K \right)^+}
    = C_{\BS,t} (K,T;\sigma_{\BS})
$$
with $\sigma_{\BS} = \sigma_{\BS,t}(\log (K/S_t),T)$.
To hedge options using the Black-Scholes formula, market makers need to hedge two effects.  First, the explicit spot effect 
$$\frac{\partial C_{\BS,t}}{\partial S_t}\,\delta S_t,$$
and secondly, the change in implied volatility conditional on a change in the spot 
$$
\frac{\partial C_{\BS,t}}{\partial \sigma_{\BS}}\,\E \left[{\delta \sigma_{\BS,t} | \delta S_t}\right].
$$
(The left-hand factor here is just that Black-Scholes vega, evaluated at $\sigma_{\BS}=\sigma_{\BS,t}(k,T)$.)

ATM implied volatilities $\sigma_t(T)=\sigma_{\text{BS},t}(0,T)$ and stock prices are both observable.  So market makers can estimate the second component for at-the-money (ATM) options using a simple regression:
$$
\delta \sigma_t(T) =  \beta_t(T)\,\frac{\delta S_t}{S_t} + \text{noise}=: \beta_t(T)\,\delta X_t + \text{noise}.
$$
For a given time to expiration $T$, we define the ATM volatility skew
$$
\cS_t(T) = \left. \frac{\partial}{\partial k} \sigma_{\BS,t}(k,T) \right|_{k=0}.
$$
Obviously, $\cS_t(T)$ is also a market observable.  
Bergomi \cite{bergomi2009smile} calls 
\begin{equation} \label{eq:SSR}
\mathcal{R}_t(T) = \frac{\beta_t(T)}{\cS_t(T)}.
\end{equation}
the {\em skew-stickiness ratio} or {\em SSR}.  

This paper is organized as follows. In Section \ref{sec:intro}, we begin by introducing the concept of the skew-stickiness-ratio (SSR). 
and discuss its relevance in financial modelling.  In Section \ref{sec:stochvol},  the analysis of the SSR is extended to scenarios with stochastic volatility. 
In section \ref{sec:charfn}, the SSR is expressed in terms of the characteristic function, which is the basis of our subsequent analysis.

Section \ref{sec:AFV} specializes to the case of affine forward variance (AFV) models,  with a specific focus on the Classical Heston model. 
Section \ref{sec:forest} details  the computation of \(R_t(T)\) using the forest expansion method, with step-by-step calculations to 
second order and leading order. The examples provided illustrate the application of these methods  to both the Classical Heston model and the Rough Heston model.

Section \ref{sec:pathdep} explores the dependence of \(R_t(T)\) on the forward variance curve and path-dependence, 
addressing how the SSR is influenced by the path characteristics of the underlying model. Finally, in Section \ref{sec:sensitivity},
the sensitivity of the SSR to different model dynamics is discussed.  In Section \ref{sec:conclusion}, we summarize our contributions and make some concluding remarks.

\noindent
{\bf Acknowledgements:} PKF acknowledges funding by the Deutsche Forschungsgemeinschaft (DFG, German Research Foundation) under Germany's Excellence Strategy -- The Berlin Mathematics Research Center MATH+ (EXC-2046/1, project ID: 390685689). Both authors acknowledge seed support for the DFG CRC/TRR 388 ``Rough Analysis, Stochastic'' of which JG is Mercator Fellow.

\section{The SSR under stochastic volatility}\label{sec:stochvol}

Our objective in this paper will be to compute the SSR under stochastic volatility.  For concreteness, let's focus on forward variance models of the form
\begin{align}
\frac{dS_t}{S_t }&= \sqrt{V_t}\,dZ_t\notag\\
d\xi_t(u) &= f_t(\xi)\,\kappa(u-t)\,dW_t,
\label{eq:SVmodel}
\end{align}
where $X = \log S$, $V_t\,dt = d\angl{X}_t$, and $d \angl{Z,W}_t = \rho\,dt$.  In particular, such a model is scale-invariant, with $\xi$ adapted to the filtration generated by $W$.

Assuming that we can compute option prices in model \eqref{eq:SVmodel}, computing $\cS_t(T)$ is in principle straightforward.  
If, as is usual in stochastic volatility modeling, we assume that $\xi$ is adapted to the filtration generated by $W$, the ATM volatility $\sigma_t(T)$ can only depend on the volatility state variables, namely the forward variances $\left\{\xi_t(u):u>t\right\}$.  Then, with $X = \log S$, as in Equation (9.5) of \cite{bergomi2015stochastic},
\begin{align} \label{equ:beta}
\beta_t(T) = \frac{\eet{d \angl{\sigma(T),X}_t}}{\eet{d\angl{X}_t}}.
\end{align}
Applying \IF, denoting the Fr\'echet derivative by $\delta$,
\begin{align*}
d \angl{\sigma(T),X}_t &= \int_t^T\,du\,\frac{\delta \sigma_t(T)}{\delta \xi_t(u)} \,\,d \angl{\xi(u),X}_t \\
&=  \sqrt{V_t}\, \int_t^T\,du\,\frac{\delta \sigma_t(T)}{\delta \xi_t(u)} \,\rho\,f_t(\xi)\,\kappa(u-t)\,dt.
\end{align*}
This gives
\begin{align}
\beta_t(T) &= \frac \rho {\sqrt{V_t}}\,
 \int_t^T\,\frac{\delta \sigma_t(T)}{\delta \xi_t(u)} \,f_t(\xi)\,\kappa(u-t)\,du.
 \label{eq:beta}
\end{align}
In order to streamline notation, let us define the operator
\begin{align}
D^\xi_t := \frac{1}{\sqrt{V_t}}\,\int_t^T \,du\,f_t(\xi)\,\kappa(u-t)\,\frac{\delta }{\delta \xi_t(u)} .
\label{eq:DxiDef}
\end{align}
With this notation \eqref{eq:beta} becomes
\begin{align}
\beta_t(T) &=  \rho\,D^\xi_t \sigma_t(T).
 \label{eq:beta2}
\end{align}

\begin{remark}
The right-hand side of \eqref{equ:beta} is really a short-hand for 
\[ \beta_t (T) = \frac{\tfrac{d}{d t} \langle \sigma (T), X
   \rangle_t}{\tfrac{d}{d t} \langle X \rangle_t}, \]
in terms of the usual quadratic covariation bracket of stochastic analysis. Semi-martingality of 
implied volatility, as seen from time $t$, with maturity $T$, as process in $t$, was discussed by various authors, 
notably \cite{durrleman2010implied}.%
\footnote{Strictly speaking, for constant money strike $K$. The forward ATM case of interest to us, amounts to take 
$K = S_t$ in which case semimartingality can be obtained via the It\^o-Wentzell formula.}
\end{remark}

\section{The SSR in terms of the characteristic function}\label{sec:charfn}

Let $\Sigma_t(k,T)=\sigma_{BS,t}(k,T)^2\,(T-t)$.  
From Equation (5.7) of \cite{gatheral2006volatility}, which is a straightforward consequence of the Lewis formula \cite{lewis2000option}, we have the following relationship between $\Sigma_t(k,T)$ and the characteristic function:
\beq
\int_{\R^+}\,\frac{da}{a^2+\tfrac14}\,\Re\left[e^{-\ui\,a\, k}\,e^{-\left(a^2+\tfrac14\right)\,\Sigma_t(k,T)}\right]=
\int_{\R^+}\,\frac{da}{a^2+\tfrac14}\,\Re\left[e^{-\ui\,a\, k}\varphi_t(T;a-\ui/2)\right].
\label{eq:Sigma}
\eeq
\eqref{eq:Sigma} leads to the following representation of the SSR $\cR_t(T)$ in terms of the characteristic function.

\begin{prop} \label{prof:SSRandCF}
Let $\psi= \log \varphi$.  Then
\begin{equation}
\cR_t(T) =-\frac{\int_{\R^+}\,\frac{da}{a^2+\tfrac14}\,\Re\left[\rho\,D^\xi_t \psi_t(T;a-\ui/2)\,\exp\left\{\psi_t(T;a-\ui/2)\right\}\right]}
{\int_{\R^+}\,\frac{a\,da}{a^2+1/4}\,\Im\left[\exp\left\{\psi_t(T;a-\ui/2)\right\}\right]}.
\label{eq:RCh}
\end{equation}
\end{prop}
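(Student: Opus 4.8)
The plan is to obtain the numerator and the denominator of \eqref{eq:RCh} separately, by perturbing the master identity \eqref{eq:Sigma} in two different directions, and then to divide, noting that a common normalising factor cancels. Throughout, write $\Sigma_t(0,T)=\sigma_t(T)^2(T-t)$ for the ATM total variance, recall $\cS_t(T)=\partial_k\sigma_{\BS,t}(k,T)|_{k=0}$, and recall from \eqref{eq:beta2} that $\beta_t(T)=\rho\,D^\xi_t\sigma_t(T)$, so that $\cR_t(T)=\beta_t(T)/\cS_t(T)$.

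First I would extract the denominator (the ATM skew) by differentiating both sides of \eqref{eq:Sigma} in the log-strike $k$ and setting $k=0$. On the left-hand side the chain rule produces two terms: the derivative of $e^{-\ui a k}$ brings down $-\ui a$, but this multiplies the \emph{real} factor $e^{-(a^2+\tfrac14)\Sigma_t(0,T)}$, so its contribution to $\Re[\cdot]$ vanishes at $k=0$; only the term differentiating $\Sigma_t(k,T)$ survives, giving $-(a^2+\tfrac14)\,\partial_k\Sigma_t(0,T)\,e^{-(a^2+\tfrac14)\Sigma_t(0,T)}$. The weight $(a^2+\tfrac14)^{-1}$ then cancels the factor $(a^2+\tfrac14)$, leaving $-\partial_k\Sigma_t(0,T)\int_{\R^+}e^{-(a^2+\tfrac14)\Sigma_t(0,T)}\,da$. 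On the right-hand side only $e^{-\ui a k}$ carries $k$, and $\partial_k\Re[e^{-\ui a k}\varphi]|_{k=0}=a\,\Im[\varphi]$, so that side reduces to $\int_{\R^+}\frac{a\,da}{a^2+\tfrac14}\Im[\varphi_t(T;a-\ui/2)]$. Since $\partial_k\Sigma_t(0,T)=2\sigma_t(T)(T-t)\,\cS_t(T)$ by the chain rule, this solves for $\cS_t(T)$ as the denominator integrand of \eqref{eq:RCh} divided by $2\sigma_t(T)(T-t)\int_{\R^+}e^{-(a^2+\tfrac14)\Sigma_t(0,T)}\,da$.

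Next I would extract the numerator by applying the operator $D^\xi_t$ of \eqref{eq:DxiDef} to \eqref{eq:Sigma} at $k=0$. Because $D^\xi_t$ is a first-order functional derivation with real coefficients, it commutes with $\Re[\cdot]$ and with the $a$-integral and obeys the chain rule. At $k=0$ the prefactor $e^{-\ui a\cdot 0}=1$ is state-independent, so on the left $D^\xi_t$ acts only through $\Sigma_t(0,T)$; the same weight cancellation gives $-D^\xi_t\Sigma_t(0,T)\int_{\R^+}e^{-(a^2+\tfrac14)\Sigma_t(0,T)}\,da$. On the right, writing $\varphi=e^{\psi}$ and using $D^\xi_t e^{\psi}=(D^\xi_t\psi)\,e^{\psi}$ gives $\int_{\R^+}\frac{da}{a^2+\tfrac14}\Re\!\big[(D^\xi_t\psi)\,e^{\psi}\big]$. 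Combining with $D^\xi_t\Sigma_t(0,T)=2\sigma_t(T)(T-t)\,D^\xi_t\sigma_t(T)$ and $\beta_t(T)=\rho\,D^\xi_t\sigma_t(T)$ expresses $\beta_t(T)$ with exactly the same normalising factor $2\sigma_t(T)(T-t)\int_{\R^+}e^{-(a^2+\tfrac14)\Sigma_t(0,T)}\,da$ and with the numerator integrand of \eqref{eq:RCh}.

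Finally, forming $\cR_t(T)=\beta_t(T)/\cS_t(T)$ cancels the common factor $2\sigma_t(T)(T-t)\int_{\R^+}e^{-(a^2+\tfrac14)\Sigma_t(0,T)}\,da$ and yields \eqref{eq:RCh}. I expect the main obstacle to be analytic bookkeeping rather than a deep idea: justifying differentiation under the integral sign, the interchange of $D^\xi_t$ with the $a$-integral and with $\Re[\cdot]$, checking that the cross term on the left genuinely vanishes at $k=0$, and tracking the real-versus-imaginary parts (and the sign convention implicit in \eqref{eq:Sigma}) carefully enough to reproduce the overall sign and the asymmetry between the $\Re$ in the numerator and the $\Im$ in the denominator.
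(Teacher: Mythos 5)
Your proposal follows essentially the same route as the paper's proof: differentiate the identity \eqref{eq:Sigma} in $k$ at $k=0$ to recover the ATM skew \eqref{eq:psiCh}, functionally differentiate it in $\xi_t(u)$ (equivalently, apply $D^\xi_t$ directly) to recover $\beta_t(T)$ via \eqref{eq:beta}, and take the ratio so that the common Gaussian normalising factor cancels. The only residual issue is the sign/constant bookkeeping you flag yourself, which hinges on the factor $\tfrac12$ implicit in the exponent of \eqref{eq:Sigma}; that is a matter of tracking the paper's conventions rather than a difference of method.
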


\begin{proof}
Differentiating \eqref{eq:Sigma} wrt $k$, and performing the integration on the LHS, we obtain (5.8) of \cite{gatheral2006volatility}:
\beq
\cS_t(T)= - e^{\Sigma_t(0,T)/8}\sqrt{\frac{2}{\pi}}\frac{1}{\sqrt{T-t}}\,\int_{\R^+}\,\frac{a\,da}{a^2+1/4}\,\Im\left[\varphi_t(T;a-\ui/2)\right].
\label{eq:psiCh}
\eeq
Now we have the skew, we need only compute the regression coefficient $\beta_t(T)$.
Evaluating the functional derivative of \eqref{eq:Sigma} with respect to $\xi_t(u)$ at $k=0$,
\begin{align*}
\int_{\R^+}\,da\,\Re\left[\frac{\delta \Sigma_t(0,T)}{\delta \xi_t(u)} \,e^{-\tfrac12 \left(a^2+\tfrac14\right)\,\Sigma_t(0,T)}\right]=
\int_{\R^+}\,\frac{da}{a^2+\tfrac14}\,\Re\left[\frac{\delta}{\delta \xi_t(u)} \varphi_t(T;a-\ui/2)\right].
\end{align*}
The LHS may once again be integrated explicitly to give
\beas
\frac{\sqrt{2 \pi }}{\sqrt{\Sigma_t(0,T)} }\,\frac{\delta \Sigma_t(0,T)}{\delta \xi_t(u)} =e^{\tfrac18\,\Sigma_t(0,T)}\,
\int_{\R^+}\,\frac{da}{a^2+\tfrac14}\,\Re\left[\frac{\delta}{\delta \xi_t(u)} \varphi_t(T;a-\ui/2)\right].
\eeas
Equivalently,
\beas
\frac{\delta \sigma_t(T)}{\delta \xi_t(u)} = e^{\tfrac18\,\Sigma_t(0,T)}\,\sqrt{\frac{2}{\pi}}\,\frac{1}{\sqrt{T-t}}\,
\int_{\R^+}\,\frac{da}{a^2+\tfrac14}\,\Re\left[\frac{\delta }{\delta \xi_t(u)} \varphi_t(T;a-\ui/2)\right].
\eeas
Substituting into \eqref{eq:beta}, we obtain
\begin{align}
\beta_t(T) &=\frac \rho {\sqrt{V_t}}\,
 \int_t^T\,\frac{\delta \sigma_t(T)}{\delta \xi_t(u)} \,f_t(\xi)\,\kappa(u-t)\,du\notag\\
 &= \rho \,e^{\tfrac18\,\Sigma_t(0,T)}\,\sqrt{\frac{2}{\pi}}\,\frac{1}{\sqrt{T-t}}\,
\int_{\R^+}\,\frac{da}{a^2+\tfrac14}\,\Re\left[ D^\xi_t\,\varphi_t(T;a-\ui/2)\right].
\label{eq:betaCh}
\end{align}
Substituting from \eqref{eq:psiCh} and \eqref{eq:betaCh}, we obtain 
\beas
\cR_t(T) = -\frac{\int_{\R^+}\,\frac{da}{a^2+\tfrac14}\,  \rho\,\Re\left[D^\xi_t \varphi_t(T;a-\ui/2)\right]}
{\int_{\R^+}\,\frac{a\,da}{a^2+1/4}\,\Im\left[\varphi_t(T;a-\ui/2)\right]}.
\eeas
With $\varphi = e^\psi$, the result follows.

\end{proof}

\section{The SSR in affine forward variance (AFV) models}\label{sec:AFV}

Recall from \cite{gatheral2019affine} that in AFV models,
\beq
d\xi_t(u) = \kappa(u-t)\,\sqrt{V_t}\,dW_t,
\label{eq:AFV}
\eeq
so $f_t(\xi) = \sqrt{V_t}$.  The following lemma gives a formula for the SSR in AFV models.

\begin{prop} \label{prop:SSRaffine}
In an affine forward variance model of the form \eqref{eq:AFV},
\begin{align}
\cR_t(T) =- \frac{\int_{\R^+}\,\frac{da}{a^2+\tfrac14}\,\Re\left[\rho\,(\kappa \star g)(T-t;a-\ui/2)\,e^{\int_t^T\,\xi_t(s)\,g(T-s;a-\ui/2)\,ds}\right]}{\int_{\R^+}\,\frac{a\,da}{a^2+1/4}\,\Im\left[e^{\int_t^T\,\xi_t(s)\,g(T-s;a-\ui/2)\,ds}\right]},
\label{eq:RAFV}
\end{align}
where $g$ satisfies the convolution integral Riccati equation
\beq
g(\tau;a) =  -\frac 12\,a(a+\ui) +\ui\, \rho \,a\, (\kappa \star g) (\tau;a) + \frac 12\, (\kappa \star g) (\tau;a)^2.
\label{eq:Riccati}
\eeq

\end{prop}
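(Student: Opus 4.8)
The plan is to exploit the affine structure of \eqref{eq:AFV}. In such models the conditional characteristic function is exponential-affine in the forward variance curve, and feeding this into the representation \eqref{eq:RCh} of Proposition \ref{prof:SSRandCF} reduces both numerator and denominator to explicit convolutions.

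First I would recall from \cite{gatheral2019affine} that the log-characteristic function of $X_T-X_t$ admits the affine representation
\[
\psi_t(T;a) = \int_t^T \xi_t(s)\,g(T-s;a)\,ds ,
\]
where $g$ solves the convolution Riccati equation \eqref{eq:Riccati}. This immediately identifies $\exp\{\psi_t(T;a-\ui/2)\}$ with the factor $e^{\int_t^T \xi_t(s)\,g(T-s;a-\ui/2)\,ds}$ appearing in \eqref{eq:RAFV}, so the denominator of \eqref{eq:RCh} already matches the denominator of \eqref{eq:RAFV}. It then remains only to evaluate $D^\xi_t \psi_t(T;a-\ui/2)$ in the numerator.

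Since $f_t(\xi)=\sqrt{V_t}$ in the AFV case, the definition \eqref{eq:DxiDef} simplifies, with the $\sqrt{V_t}$ factors cancelling, to $D^\xi_t = \int_t^T du\,\kappa(u-t)\,\frac{\delta}{\delta\xi_t(u)}$. Because the affine representation is \emph{linear} in the curve $\xi_t(\cdot)$, its Fr\'echet derivative is read off immediately as $\frac{\delta \psi_t(T;a)}{\delta\xi_t(u)} = g(T-u;a)$. Applying $D^\xi_t$ then gives $D^\xi_t \psi_t(T;a) = \int_t^T \kappa(u-t)\,g(T-u;a)\,du$, and the substitution $s=u-t$ recognizes this as the convolution $(\kappa\star g)(T-t;a)$. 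Substituting $D^\xi_t\psi = (\kappa\star g)(T-t;\,\cdot\,)$ together with the affine form of $\exp\{\psi\}$ into the numerator of \eqref{eq:RCh} yields \eqref{eq:RAFV}.

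The main obstacle is not this bookkeeping but the structural input from \cite{gatheral2019affine}: establishing that the dynamics \eqref{eq:AFV} really do produce an exponential-affine characteristic function governed by the Riccati equation \eqref{eq:Riccati}. Beyond that, the one point requiring care is to confirm that the convolution convention in \eqref{eq:Riccati} is consistent with the integral $\int_t^T \kappa(u-t)\,g(T-u;a)\,du$ produced by $D^\xi_t$; this follows from the symmetry $(\kappa\star g)(\tau;a) = \int_0^\tau \kappa(\tau-s)\,g(s;a)\,ds = \int_0^\tau \kappa(s)\,g(\tau-s;a)\,ds$.
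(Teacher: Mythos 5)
Your proposal is correct and follows essentially the same route as the paper: invoke the exponential-affine representation $\psi_t(T;a)=\int_t^T\xi_t(s)\,g(T-s;a)\,ds$ from \cite{gatheral2019affine}, read off the Fr\'echet derivative $\delta\psi_t(T;a)/\delta\xi_t(u)=g(T-u;a)$ by linearity, apply $D^\xi_t$ (with the $\sqrt{V_t}$ factors cancelling since $f_t(\xi)=\sqrt{V_t}$) to obtain $(\kappa\star g)(T-t;a)$, and substitute into \eqref{eq:RCh}. Your additional remarks on the convolution convention and on the structural input from \cite{gatheral2019affine} are sensible but do not change the argument.
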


\begin{proof}

From \cite{gatheral2019affine}, 
\beq
\psi_t(T;a) = \log \varphi_t(T;a) =  \int_t^T\,\xi_t(s)\,g(T-s;a)\,ds,
\label{eq:varphi}
\eeq
where $g$ satisfies the convolution integral Riccati equation \eqref{eq:Riccati}.
Functionally differentiating \eqref{eq:varphi} gives, for $u \in [t,T]$,
\begin{align*}
\frac{\delta }{\delta \xi_t(u)} \psi_t(T;a)&= g(T-u;a).
\end{align*}
Thus
\begin{align*}
D^\xi_t \psi_t(T;a-\ui/2)
&=  (\kappa \star g)(T-t;a-\ui/2).
\end{align*}
Substitution into \eqref{eq:RCh} yields the result.

\end{proof}

Given a (numerical or analytical) expression $g(\cdot)$ for the solution of the convolution Riccati equation \eqref{eq:Riccati}, the expression \eqref{eq:RAFV} may be evaluated numerically.  In particular, $\cR_t(T)$ may be evaluated in the rough Heston model.

\subsection{The Classical Heston SSR}

Consider the classical Heston model given by
\begin{align*}
\frac{dS_t}{S_t}&=\sqrt{V_t}\,dZ_t\\
dV_t &= -\lambda\,(V_t-\bar V)\,dt+\nu\,\sqrt{V_t}\,dW_t,
\end{align*}
with $d\angl{W,Z}_t = \rho\,dt$.  As is well-known (see \cite{gatheral2006volatility} for example), the characteristic exponent is given by
\begin{align}
\psi_t(T;a) =\log \eet{e^{\ui a X_T}}
 = \ui a\,X_t + D(\tau; a)\,V_t + C(\tau; a)\, \bar V,
 \label{eq:HestonCF}
 \end{align}
 where $X= \log S$, $\tau = T-t$, and $C$ and $D$ are complicated functions given in closed-form. The proof of the following lemma shows how the SSR may be computed in a Markovian model.
 
 \begin{lemma}
 
 In the classical Heston model,
  \begin{align}
 \cR_t(T) =\frac{
 \int_{\R^+}\,\frac{da}{a^2+1/4}\,\Re\left[\rho \nu\,D(\tau; a-\ui/2)\,
 \exp\left\{D(\tau; a-\ui/2)\,V_t + C(\tau; a-\ui/2)\, \bar V\right\} 
 \right]
 }{
 \int_{\R^+}\,\frac{a\,da}{a^2+1/4}\,\Im\left[ \exp\left\{D(\tau; a-\ui/2)\,V_t + C(\tau; a-\ui/2)\, \bar V\right\}\right]
 }
 \label{eq:ssrHeston}
 \end{align}

 \end{lemma}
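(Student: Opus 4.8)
The plan is to exploit the Markovian structure of the classical Heston model so that the forward-variance (functional) derivative $D^\xi_t$ collapses to an ordinary partial derivative in the spot variance $V_t$. Since $\sigma_t(T)$ can depend on the state only through $V_t$, I would first rewrite the regression coefficient \eqref{equ:beta} using \IF. Writing $d\sigma_t(T) = \partial_{V_t}\sigma_t(T)\,dV_t + (\cdots)\,dt$ and keeping only the martingale parts $dV_t = \nu\sqrt{V_t}\,dW_t$ and $dX_t = \sqrt{V_t}\,dZ_t$ with $d\angl{W,Z}_t=\rho\,dt$, one finds $d\angl{\sigma(T),X}_t = \partial_{V_t}\sigma_t(T)\,\nu\rho\,V_t\,dt$ while $d\angl{X}_t = V_t\,dt$, so the common $V_t$ cancels and
\[
\beta_t(T) = \nu\rho\,\partial_{V_t}\sigma_t(T).
\]
This is the Markovian counterpart of \eqref{eq:beta2}, with the operator $D^\xi_t$ replaced by $\nu\,\partial_{V_t}$, and it gives $\cR_t(T) = \nu\rho\,\partial_{V_t}\sigma_t(T)/\cS_t(T)$.

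The second step is to express $\partial_{V_t}\sigma_t(T)$ through the characteristic function, by differentiating the Lewis--Gatheral identity \eqref{eq:Sigma} at $k=0$ with respect to $V_t$ in place of the functional derivative $\delta/\delta\xi_t(u)$ used in the proof of Proposition \ref{prof:SSRandCF}. The left-hand integral is again Gaussian and evaluates in closed form, yielding, exactly as for $\delta\sigma_t(T)/\delta\xi_t(u)$ there,
\[
\partial_{V_t}\sigma_t(T) = e^{\Sigma_t(0,T)/8}\sqrt{\tfrac{2}{\pi}}\,\frac{1}{\sqrt{T-t}}\int_{\R^+}\frac{da}{a^2+\tfrac14}\,\Re\!\left[\partial_{V_t}\varphi_t(T;a-\ui/2)\right].
\]
From \eqref{eq:HestonCF}, using scale invariance to set $X_t=0$, we have $\psi_t(T;a-\ui/2) = D(\tau;a-\ui/2)\,V_t + C(\tau;a-\ui/2)\,\bar V$, so $\partial_{V_t}\psi_t(T;a-\ui/2) = D(\tau;a-\ui/2)$ and hence $\partial_{V_t}\varphi = D(\tau;a-\ui/2)\,e^{\psi}$.

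Finally I would form the ratio $\cR_t(T)=\beta_t(T)/\cS_t(T)$, inserting this expression for $\partial_{V_t}\sigma_t(T)$ together with the skew formula \eqref{eq:psiCh}; the common prefactor $e^{\Sigma_t(0,T)/8}\sqrt{2/\pi}\,(T-t)^{-1/2}$ cancels between numerator and denominator, leaving precisely the quotient of the two $a$-integrals in \eqref{eq:ssrHeston}. As a consistency check and an alternative route, the whole computation is subsumed by Proposition \ref{prop:SSRaffine}: Heston is the AFV model with kernel $\kappa(\cdot)=\nu e^{-\lambda\,\cdot}$, so $D^\xi_t\psi_t(T;a-\ui/2) = (\kappa\star g)(\tau;a-\ui/2) = \nu\,(e^{-\lambda\,\cdot}\star g)(\tau;a-\ui/2)$, and one verifies that $(e^{-\lambda\,\cdot}\star g)(\tau;a)$ solves the Heston Riccati equation for $D(\tau;a)$ with $D(0;a)=0$, so that $D^\xi_t\psi = \nu\,D(\tau;a-\ui/2) = \nu\,\partial_{V_t}\psi$, matching the Markovian computation exactly. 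The main obstacle is the second step: justifying differentiation under the integral sign in \eqref{eq:Sigma} and carrying out the explicit Gaussian integral on the left, precisely as in Proposition \ref{prof:SSRandCF} but with $\partial_{V_t}$ replacing $\delta/\delta\xi_t(u)$. Once that identity for $\partial_{V_t}\sigma_t(T)$ is in hand, the remainder is routine substitution and cancellation.
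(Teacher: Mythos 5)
Your proposal is correct and follows essentially the same route as the paper: reduce $D^\xi_t$ to $\nu\,\partial_{V_t}$ by Markovianity so that $\beta_t(T)=\rho\nu\,\partial_{V_t}\sigma_t(T)$, differentiate the characteristic-function representation in $V_t$ using $\partial_{V_t}\psi = D(\tau;\cdot)$, and divide by the skew \eqref{eq:psiCh}. The only cosmetic difference is that the paper differentiates the Lewis call-price formula \eqref{eq:cHeston} and divides by the explicit Black--Scholes vega, whereas you differentiate the implied-variance identity \eqref{eq:Sigma} and evaluate its Gaussian left-hand side -- equivalent by the chain rule, with the same cancellation of the prefactor $e^{\Sigma_t(0,T)/8}\sqrt{2/\pi}\,(T-t)^{-1/2}$.
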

 
 \begin{proof}

The Lewis formula \cite{lewis2000option}, gives the price of a European call option  in terms of the characteristic function by
 \begin{align}
 C_t = \eet{(S_T-K)^+} = S - \sqrt{S K}\,\frac{1}{\pi}\,\int_{\R^+}\,\frac{du}{u^2+1/4}\,\Re\left[\varphi_t(T;u-\ui/2)\right].
\label{eq:cHeston}
 \end{align}
 Following Section 9.8 on page 368 of \cite{bergomi2015stochastic}, $\beta_t(T)$ may be computed numerically by shifting the initial instantaneous  variance $V_t$. In infinitesimal terms, and with the notation of Section \ref{sec:intro}, Bergomi's recipe reads
 \begin{align*}
 \beta_t(T) 
 &=\rho\,\nu\,\left(\frac{\p C_{BS,t}}{\p \sigma_{\BS}}\right)^{-1}\,\frac{\p C_t}{\p V_t}.
 \end{align*}
 Substituting the explicit form \eqref{eq:HestonCF} of the characteristic function into \eqref{eq:cHeston}, and differentiating , we find (with $K=S_t$, taken without loss of generality as equal to $1$),
 \begin{align*}
 \frac{\p C_t}{\p V_t} = -\frac 1 \pi \int_{\R^+}\,\frac{du}{u^2+1/4}\,\Re\left[D(\tau; a)\,
 \exp\left\{ D(\tau; a)\,V_t + C(\tau; a)\, \bar V\right\}
 \right].
 \end{align*}
 Differentiating the Black-Scholes formula, again with $K=S_t=1$, we obtain that
 \begin{align*}
\left. \frac{\p C_{\BS,t}}{\p \sigma_{\BS}}  \right|_{ \sigma_{\BS}=\sigma_{t}(T)}
 =e^{-\Sigma_t(0,T)/8}\sqrt{\frac{T-t}{2\,\pi}},
 \end{align*}
using the familiar formula for the Black-Scholes Vega.  Thus
 \begin{align}
 \beta_t(T)&= -e^{\Sigma_t(0,T)/8}\,\rho\,\nu\,\sqrt{\frac{2}{\pi}} \,\sqrt{\frac{1}{T-t}} \notag\\
&\qquad\qquad\qquad \times \int_{\R^+}\,\frac{du}{u^2+1/4}\,\Re\left[D(\tau; a)\,
 \exp\left\{D(\tau; a)\,V_t + C(\tau; a)\, \bar V\right\}
 \right].
 \label{eq:betaHeston}
 \end{align}
 Also from  \eqref{eq:psiCh},
 \begin{equation}
\cS_t(T)= - e^{\Sigma_t(0,T)/8}\sqrt{\frac{2}{\pi}}\frac{1}{\sqrt{T-t}}\,\int_{\R^+}\,\frac{u\,du}{u^2+1/4}\,\Im\left[ \exp\left\{ D(\tau; a)\,V_t + C(\tau; a)\, \bar V\right\}\right].
\label{eq:skewHeston}
 \end{equation}
 Dividing \eqref{eq:betaHeston} by \eqref{eq:skewHeston} yields the result.
 
 \end{proof}

 Since $C(\tau; a)$ and $D(\tau; a)$ have closed-form expressions, $\cR_t(T)$ may be computed using straightforward numerical integration.  \eqref{eq:ssrHeston} is of course a special case of formula \eqref{eq:RAFV}; whereas Bergomi's method works only for Markovian models, formula \eqref{eq:RAFV} applies to all affine forward variance models.

\section{$\cR_t(T)$ from the forest expansion}\label{sec:forest}

From~\cite{alos2020exponentiation, friz2022forests}, we have the following definition of the diamond product.
\begin{definition} \label{def:diamond}
Given two continuous semimartingales $A,B$ with integrable covariation process $\langle A , B \rangle$, the diamond product is defined by
$$
     (A \dm B)_t (T) := \eef{\langle A , B \rangle_{t,T}} = \eef{\langle A , B \rangle_{T}} -\langle A , B \rangle_t \; .
$$
\end{definition} 

Diamond products of (sufficiently integrable, continuous) semimartingales 
naturally lead to binary ``diamond'' trees 
such as $(A \dm B) \dm C$.  
Diagrammatically, the diamond product of two trees $\mT_1$ and $\mT_2$ is represented by {\it root joining}, 
\tkz
$$
\mT_1 \dm \mT_2 = \T1T2d,
$$
where the two binary trees $\mT_1$ and $\mT_2$ are represented as the single leaves $\S$ and $\U$. We regard linear combinations of diamond trees as {\em forests}.   

From \cite{alos2020exponentiation}, the cumulant  generating function (CGF) is given by the forest expansion 
\beq
\psi_t(T;a) := \log \varphi_t(T;a) =  \sum_{\ell=0}^\infty\,\tilde {\mF}_\ell(a),
\label{eq:CGFmF}
\eeq
where the $\tmF_\ell$ satisfy the recursion
$
\tmF_0 =  -\tfrac 1 2 a(a+\ui)\,M
$
and for $\ell>0$,
\beq
\tmF_\ell=\frac12\,\sum_{j=0}^{\ell-2}\,\left(\tmF_{\ell-2-j} \dm \tmF_j\right) +\ui a\, \left(X \dm \tmF_{\ell-1}\right).
\label{eq:tmFrecursion}
\eeq

\noindent Applying the recursion \eqref{eq:tmFrecursion}, the first few $ \tmF$ forests are given by
\tikzexternaldisable
\beas
\tmF_0  &=&-\tfrac 1 2 a(a+\ui)\, \M  \nonumber\\
\tmF_1 &=& -\tfrac \ui 2 a^2(a+\ui)\, \MXd \nonumber\\
\tmF_2 &=& \tfrac1{2^3} a^2\,(a+\ui)^2\,\MMd+ 
\tfrac 12 a^3\,(a+\ui)\,  \MXdXd
  \nonumber\\
\tmF_3 &=&  (\tmF_0 \dm \tmF_1) + \ui a\,\X \dm \tmF_2 \nonumber\\
&=& \tfrac{\ui}{2^2} a^3\,(a+\ui)^2\,\XMdMd
+ \tfrac{\ui}{2^3} a^3\,(a+\ui)^2\,\MMdXd+\tfrac{\ui}{2} a^4\,(a+\ui)\,\MXdXdXd,
\label{eq:firstFive}
\eeas
\tikzexternaldisable
where $\X = X_t$ and $\M=M_t(T) = (X \dm X)_t(T) = \int_t^T\,\xi_t(u)\,du$.

\begin{remark}
Note that the total probability and martingale constraints are satisfied for each tree.
That is $\psi_t^T(0) = \psi_t^T(-\ui) = 0$.
\end{remark}

\subsection{Computation of $\cR_t(T)$ to second order in the forest expansion}

Consider a formal expansion according to values of $\ell$, which we term the {\em forest expansion}.
The number of leaves in each tree in the forest  $\tmF_\ell$ is given by $\ell+2$  where $\X$ counts as one leaf, and $\M = \XXd$ counts as two leaves.

\begin{prop}\label{prop:2ndOrder}
To second order in the forest expansion,
\begin{align}
\cR_t(T)
 &= \frac{M_t(T)\,\rho\,
D^\xi_t \left\{\M +\tfrac 12\,
 \MXd  
 \right\}}
 {
 \MXd  +\MXdXd
 }.
 \label{eq:SSRnextOrder}
\end{align}

\end{prop}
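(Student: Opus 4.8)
The plan is to substitute the forest expansion \eqref{eq:CGFmF}, truncated at $\tmF_2$, into the characteristic-function formula \eqref{eq:RCh} of Proposition \ref{prof:SSRandCF}, and then to evaluate the resulting $a$-integrals as elementary Gaussian moments. The single simplification that makes this tractable is that the martingale shift $a\mapsto a-\ui/2$ collapses the recurring factor, $a(a+\ui)\big|_{a\to a-\ui/2}=a^2+\tfrac14$. Consequently $\tmF_0$ reduces to $-\tfrac12(a^2+\tfrac14)M$, with $M=M_t(T)=(X\dm X)_t(T)$, so that $e^{\tmF_0}=\exp\{-\tfrac12(a^2+\tfrac14)M\}$ is a \emph{real} Gaussian weight in $a$; similarly the prefactor of $\tmF_1$ becomes $-\tfrac{\ui}{2}(a-\ui/2)(a^2+\tfrac14)$, and those of the two trees in $\tmF_2$ become $\tfrac18(a^2+\tfrac14)^2$ (on $M\dm M$) and $\tfrac12(a-\ui/2)^2(a^2+\tfrac14)$ (on $X\dm(X\dm M)$). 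Since each diamond tree is real-valued, $\Re$ and $\Im$ act only on these explicit prefactors.

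I would handle the denominator first. As $e^{\tmF_0}$ is real, $\Im[e^\psi]=e^{-\tfrac12(a^2+\tfrac14)M}\,\Im[\,1+\tmF_1+\tmF_2+\cdots]$, and $\Im$ kills both the constant term and the $M\dm M$ term of $\tmF_2$ (whose prefactor is real), leaving $\Im\tmF_1=-\tfrac{a}{2}(a^2+\tfrac14)(X\dm M)$ and $\Im\tmF_2=-\tfrac{a}{2}(a^2+\tfrac14)(X\dm(X\dm M))$; the product $\tfrac12\tmF_1^2$ is a six-leaf term and is dropped. Integrating against $\tfrac{a\,da}{a^2+1/4}$ produces $-\tfrac12 J_2\,[(X\dm M)+(X\dm(X\dm M))]$, where $J_2:=\int_0^\infty a^2 e^{-\tfrac12(a^2+\tfrac14)M}\,da$, which is exactly the claimed denominator.

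For the numerator I would exploit that $D^\xi_t$ is a derivation, so $D^\xi_t e^\psi=(D^\xi_t\psi)\,e^\psi$; writing $D^\xi_t\psi=D^\xi_t\tmF_0+D^\xi_t\tmF_1+\cdots$ and $e^\psi=e^{\tmF_0}(1+\tmF_1+\cdots)$, the only contributions at the required order are $e^{\tmF_0}D^\xi_t\tmF_0$ and $e^{\tmF_0}D^\xi_t\tmF_1$, with real parts $-\tfrac12(a^2+\tfrac14)\,D^\xi_t M$ and $-\tfrac14(a^2+\tfrac14)\,D^\xi_t(X\dm M)$ times the Gaussian weight. The cross term $D^\xi_t\tmF_0\cdot\tmF_1$ and the term $D^\xi_t\tmF_2$ involve higher-leaf forests and are dropped. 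Integrating against $\tfrac{da}{a^2+1/4}$ gives $-\tfrac12 J_0\,D^\xi_t M-\tfrac14 J_0\,D^\xi_t(X\dm M)=-\tfrac12 J_0\,D^\xi_t\{M+\tfrac12(X\dm M)\}$, with $J_0:=\int_0^\infty e^{-\tfrac12(a^2+\tfrac14)M}\,da$; the ratio $\tfrac14:\tfrac12$ of the two coefficients is precisely what generates the $\tfrac12$ attached to $X\dm M$ in the statement.

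Assembling $\cR_t(T)$ from \eqref{eq:RCh}, the Gaussian weights cancel between numerator and denominator, and the two moment integrals combine through the elementary identity $J_0/J_2=M_t(T)$ — this is the sole source of the explicit factor $M_t(T)$ in \eqref{eq:SSRnextOrder}. The main obstacle is the consistent bookkeeping of the truncation: the leading Gaussian $\tmF_0$ must be retained exactly so that the moment integrals (hence $M_t(T)$) are produced, while the remaining factors are expanded only to the order at which the numerator keeps $D^\xi_t M$ and $D^\xi_t(X\dm M)$ and the denominator keeps $X\dm M$ and $X\dm(X\dm M)$. Since $\cR$ is a ratio whose numerator begins one order below its denominator, \emph{second order} means retaining two consecutive orders in each, which is exactly why $D^\xi_t\tmF_2$ is discarded from the numerator while $\Im\tmF_2$ is kept in the denominator; the overall sign and normalization then have to be carried through with care.
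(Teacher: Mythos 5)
Your proposal is correct and follows essentially the same route as the paper: truncate the forest expansion at $\tmF_2$, use that the shift $a\mapsto a-\ui/2$ makes $e^{\tmF_0}$ a real Gaussian weight so that $\Re$ and $\Im$ act only on the polynomial prefactors, and substitute into \eqref{eq:RCh} with the same asymmetric truncation of numerator and denominator. The only difference is that you spell out the Gaussian moment identity $\int_0^\infty e^{-\frac12(a^2+\frac14)M}\,da \big/ \int_0^\infty a^2 e^{-\frac12(a^2+\frac14)M}\,da = M$ as the source of the prefactor $M_t(T)$, which the paper leaves implicit in ``the result follows by substitution.''
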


\begin{proof}
From \eqref{eq:CGFmF}, to second order in the forest expansion,
\beas
\psi_t(T;a-\ui/2) =  -\tfrac12 \left(a^2 +\tfrac14\right)\left\{\M +(\ui\,a+\tfrac 12)\,
 \MXd -\tfrac1{4} \left(a^2+
 \tfrac1{4}\right)\,\MMd-
  (a-\ui/2)^2\,  \MXdXd\right\}.
\eeas
Then
 \begin{align*}
\Re\left[e^{\psi_t(T;a-\ui/2)}\,D^\xi_t \left\{\M +(\ui\,a+\tfrac 12)\,
 \MXd 
 \right\}\right]
 &\approx \Re\left[e^{-\tfrac1{2} \left(a^2+\tfrac14\right)\,M}\,D^\xi_t \left\{\M +(\ui\,a+\tfrac 12)\,
 \MXd
 \right\}\right]\\
 &=e^{-\tfrac1{2} \left(a^2+\tfrac14\right)\,M}\,D^\xi_t\left\{\M +\tfrac 12\,
 \MXd  \right\},
 \end{align*}
and
 \begin{align*}
\Im\left[e^{\psi_t(T;a-\ui/2)}\right]
 &=-\tfrac12\, e^{-\tfrac1{2} \left(a^2+\tfrac14\right)\,M}\,a\,\left(a^2+\tfrac14\right)\,\left\{
\MXd + \MXdXd 
 \right\}.
 \end{align*}
 The result follows by substitution into \eqref{eq:RCh}.
 \end{proof}

\subsection{Computation of $\cR_t(T)$ to leading order}

\begin{lemma}\label{lem:leadingorder}
To leading order, 
\begin{align}
\cR_t(T)
 &= \frac{M_t(T)}{\sqrt{V_t}}\,\frac{\int_t^T  \,f_t(\xi)\,\kappa(u-t)\,du}
 {\int_t^T \,ds\,\int_s^T\,\eet{\sqrt{V_s}  \,f_s(\xi)}\,\kappa(u-s)\,du}.
 \label{eq:SSRleadingOrder2}
\end{align}
\end{lemma}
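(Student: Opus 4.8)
\tkz
The plan is to obtain \eqref{eq:SSRleadingOrder2} directly from the second-order formula \eqref{eq:SSRnextOrder} of Proposition~\ref{prop:2ndOrder}, by truncating to the lowest order in the forest expansion and then evaluating the two surviving diamond quantities in closed form. Since the trees in $\tmF_\ell$ carry $\ell+2$ leaves, the tree $\M$ (two leaves) is of strictly lower order than $\MXd$ (three leaves), which is in turn lower than $\MXdXd$ (four leaves). Retaining only the leading tree in the numerator bracket and in the denominator of \eqref{eq:SSRnextOrder}, I would write
\beq
\cR_t(T) \approx \frac{M_t(T)\,\rho\,D^\xi_t\M}{\MXd}.
\label{eq:leadingRatio}
\eeq

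It then remains to evaluate $D^\xi_t\M$ and $\MXd$. For the first, recall that $\M = M_t(T) = \int_t^T\xi_t(u)\,du$, so that $\delta\M/\delta\xi_t(u) = 1$ for $u\in[t,T]$; inserting this into the definition \eqref{eq:DxiDef} of $D^\xi_t$ gives
\bas
D^\xi_t\M = \frac{1}{\sqrt{V_t}}\,\int_t^T f_t(\xi)\,\kappa(u-t)\,du.
\eas

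For the second, Definition~\ref{def:diamond} gives $\MXd = (X\dm M)_t(T) = \eet{\angl{X,M(\cdot,T)}_{t,T}}$, so I need the martingale part of the process $s\mapsto M_s(T) = \int_s^T\xi_s(u)\,du$. Differentiating in $s$ and using $d\xi_s(u) = f_s(\xi)\,\kappa(u-s)\,dW_s$ together with $\xi_s(s) = V_s$, the moving lower limit contributes only a finite-variation term, so that
\bas
dM_s(T) = -V_s\,ds + \left(\int_s^T f_s(\xi)\,\kappa(u-s)\,du\right)dW_s.
\eas
Pairing the martingale part against $dX_s = \sqrt{V_s}\,dZ_s - \tfrac12 V_s\,ds$ and using $d\angl{Z,W}_s = \rho\,ds$ yields $d\angl{X,M}_s = \rho\,\sqrt{V_s}\left(\int_s^T f_s(\xi)\,\kappa(u-s)\,du\right)ds$, whence, by Fubini,
\bas
\MXd = \rho\int_t^T ds\int_s^T \eet{\sqrt{V_s}\,f_s(\xi)}\,\kappa(u-s)\,du.
\eas

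Substituting the last two displays into \eqref{eq:leadingRatio}, the common factor $\rho$ cancels and \eqref{eq:SSRleadingOrder2} follows. The only genuine content is the computation of $\MXd$: correctly identifying the martingale part of the forward-variance integral $M_s(T)$ (and discarding the $-V_s\,ds$ boundary term, which does not enter the covariation), and then using Fubini to pull the deterministic kernel $\kappa$ out of the conditional expectation. The remaining point deserving care is the justification that the discarded trees are genuinely subleading, which is exactly the leaf-count grading of the forest expansion; everything else is routine bookkeeping.
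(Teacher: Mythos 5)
Your proposal is correct and follows essentially the same route as the paper: truncate the second-order formula of Proposition~\ref{prop:2ndOrder} to the leading trees, evaluate $D^\xi_t M_t(T)$ from $\delta M_t(T)/\delta\xi_t(u)=1$, and identify $(M\dm X)_t(T)$ with the stated double integral. The only difference is that you explicitly derive the martingale part of $s\mapsto M_s(T)$ to justify the expression for $(M\dm X)_t(T)$, a step the paper simply asserts, and your derivation of it is correct.
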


\begin{proof}

From Proposition \ref{prop:2ndOrder}, to leading order,
\begin{align*}
\cR_t(T)
 &= \frac{M_t(T)\,\rho\,
D^\xi_t \M }
 {
 \MXd 
 }.
\end{align*}
Since $\frac{\delta }{\delta \xi_t(u)} M_t(T) = 1$, by the definition \eqref{eq:DxiDef} of $D^\xi$,
\begin{align*}
D^\xi_t \M = \frac{1}{\sqrt{V_t}}\,\int_t^T  \,f_t(\xi)\,\kappa(u-t)\,du,
\end{align*}
Also
 \begin{align*}
 \MXd =  \rho\,\int_t^T \,ds\,\left(\int_s^T\,\eet{\sqrt{V_s}  \,f_s(\xi)}\,\kappa(u-s)\,du\right).
 \end{align*}

\end{proof}

The following corollary gives the limit as $T \to t$ of the SSR for any stochastic volatility model of the form \eqref{eq:SVmodel}.
\begin{cor}
Let $\tau = T-t$.  Then
\begin{align}
\lim_{T \to t} \cR_t(T) 
&=  \tau\,\frac{d}{d\tau} \log \left(\int_0^\tau \,ds\,\int_0^{s}\,\kappa(u)\,du\right).
\label{eq:Rlod}
\end{align}

\end{cor}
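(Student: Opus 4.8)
The plan is to start from the leading-order expression \eqref{eq:SSRleadingOrder2} of Lemma \ref{lem:leadingorder} and track the behaviour of each factor as $T \downarrow t$, i.e. as $\tau = T-t \to 0$; the result is a leading-order (short-maturity) identity. The key observation is that in this limit every \emph{stochastic} ingredient — the spot variance, the volatility-of-volatility $f_t(\xi)$, and the conditional expectations — either cancels or freezes at its time-$t$ value, leaving a purely deterministic, kernel-dependent quantity.

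First I would treat the three building blocks separately. For $M_t(T) = \int_t^T \xi_t(u)\,du$, continuity of $u \mapsto \xi_t(u)$ at $u=t$ with $\xi_t(t) = V_t$ gives $M_t(T) = V_t\,\tau + o(\tau)$, so $M_t(T)/V_t \to \tau$ at leading order. The numerator integral factorizes immediately, since $f_t(\xi)$ does not depend on $u$: after the shift $u \mapsto u-t$ one gets $\int_t^T f_t(\xi)\,\kappa(u-t)\,du = f_t(\xi)\int_0^\tau \kappa(u)\,du$. In the denominator, as $T\downarrow t$ forces $s \downarrow t$ throughout the outer integral, path-regularity of the model lets me replace $\eet{\sqrt{V_s}\,f_s(\xi)}$ by its limiting value $\sqrt{V_t}\,f_t(\xi)$.

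With these substitutions the factors $\sqrt{V_t}$, $V_t$ and $f_t(\xi)$ cancel between numerator and denominator, leaving
$$\cR_t(T) \;\longrightarrow\; \tau\,\frac{\int_0^\tau \kappa(u)\,du}{\displaystyle\int_t^T ds\int_s^T \kappa(u-s)\,du}.$$
The one genuine computation is then to identify the denominator with $G(\tau) := \int_0^\tau ds\int_0^s \kappa(u)\,du$. Writing the inner integral as $\int_0^{T-s}\kappa(w)\,dw$ via $w=u-s$, and then substituting $y=T-s$ in the outer integral, turns $\int_t^T ds\int_s^T \kappa(u-s)\,du$ into $\int_0^\tau \left(\int_0^y \kappa(w)\,dw\right) dy = G(\tau)$. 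Since $G'(\tau) = \int_0^\tau \kappa(u)\,du$, the right-hand side is exactly $\tau\,G'(\tau)/G(\tau) = \tau\,\frac{d}{d\tau}\log G(\tau)$, which is \eqref{eq:Rlod}.

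The main obstacle I anticipate is not the change of variables (which is routine) but justifying the two limiting replacements: that $M_t(T)\sim V_t\,\tau$ and that $\eet{\sqrt{V_s}\,f_s(\xi)} \to \sqrt{V_t}\,f_t(\xi)$ controllably enough that the ratio of the two vanishing integrals converges to the ratio of their leading terms. Both rest on continuity/right-regularity of the forward-variance curve and of $f$ at $u=t$, $s=t$; one should also check $G(\tau)>0$ for $\tau>0$ so the logarithmic derivative is well defined. As sanity checks, a bounded (diffusive) kernel has $\kappa(0)>0$ and yields the classical value $2$, while a rough kernel $\kappa(u)\sim c\,u^{H-1/2}$ gives $H+3/2$, matching the claims in the abstract.
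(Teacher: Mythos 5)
Your proposal is correct and follows essentially the same route as the paper's proof: start from the leading-order expression of Lemma \ref{lem:leadingorder}, freeze $\xi_t(u)\approx V_t$ and $\eet{\sqrt{V_s}\,f_s(\xi)}\approx\sqrt{V_t}\,f_t(\xi)$ as $T\to t$ so that these factors cancel, and then identify $\int_t^T ds\int_s^T\kappa(u-s)\,du$ with $\int_0^\tau ds\int_0^s\kappa(u)\,du$ by a change of variables. Your extra remarks on justifying the limiting replacements and on the positivity of $G(\tau)$ are reasonable but do not change the argument, which matches the paper's.
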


\begin{proof}

We have $\xi_t(u) = V_t + \text{higher order}$ and
$$
\eet{\sqrt{V_s}  \,f_s(\xi)} = \sqrt{V_t}  \,f_t(\xi) + \text{higher order}.
$$
Thus   to leading order as $T \to t$, $M_t(T)\sim V_t\, \tau$ and from \eqref{eq:SSRleadingOrder2},
\begin{align*}
\cR_t(T)
 &= \frac{M_t(T)}{V_t}\,\frac{\int_t^T  \,\kappa(u-t)\,du}
 {\int_t^T \,ds\,\left(\int_s^T\,\kappa(u-s)\,du\right)}
 = \tau\,\frac{d}{d\tau} \log \left(\int_0^\tau \,ds\,\int_0^{s}\,\kappa(u)\,du\right).
\end{align*}

\end{proof}

The following straightforward corollary confirms a formal computation of Fukasawa in Remark 2.10 of \cite{fukasawa2021volatility}.

\begin{cor}\label{cor:alphaplus1}
Let  $
\kappa(s)\sim s^{\alpha-1}
$ as $s \to 0$.  
Then
\begin{align*}
\lim_{t \uparrow T} \cR_t(T)  = \alpha + 1.
\end{align*}
\end{cor}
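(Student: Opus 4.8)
The plan is to start from the preceding corollary, which has already done the real work: it reduces the short-maturity limit of the SSR for \emph{any} model of the form \eqref{eq:SVmodel} to the purely analytic quantity
\[
\lim_{T\to t}\cR_t(T) = \tau\,\frac{d}{d\tau}\log G(\tau), \qquad G(\tau):=\int_0^\tau ds\int_0^s \kappa(u)\,du .
\]
So the only remaining task is to evaluate the right-hand side under the hypothesis $\kappa(s)\sim s^{\alpha-1}$ as $s\downarrow 0$. My first move is to rewrite the logarithmic derivative as a ratio, so as never to differentiate an asymptotic relation:
\[
\tau\,\frac{d}{d\tau}\log G(\tau) = \frac{\tau\,G'(\tau)}{G(\tau)}, \qquad G'(\tau)=\int_0^\tau \kappa(u)\,du .
\]

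The heart of the argument is then a single elementary fact: \emph{integration from the origin preserves power-law asymptotics}. Concretely, if $h(u)\sim u^{p}$ as $u\downarrow 0$ with $p>-1$, then $\int_0^s h(u)\,du \sim s^{p+1}/(p+1)$; this follows by a squeeze, since for any $\eps>0$ there is $\delta>0$ with $(1-\eps)u^{p}\le h(u)\le (1+\eps)u^{p}$ on $(0,\delta)$, and integrating these bounds and letting $\eps\downarrow 0$ gives the claim. Note that integrability of $\kappa$ near $0$ forces $\alpha>0$, so $p=\alpha-1>-1$ and the lemma applies. Applying it once yields $G'(\tau)=\int_0^\tau\kappa\sim \tau^{\alpha}/\alpha$, and applying it a second time (to $G'$, whose exponent is now $\alpha>-1$) yields $G(\tau)=\int_0^\tau G'\sim \tau^{\alpha+1}/\bigl(\alpha(\alpha+1)\bigr)$.

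Substituting these two asymptotics into the ratio gives
\[
\frac{\tau\,G'(\tau)}{G(\tau)} \sim \frac{\tau\cdot \tau^{\alpha}/\alpha}{\tau^{\alpha+1}/\bigl(\alpha(\alpha+1)\bigr)} = \alpha+1,
\]
whence $\lim_{t\uparrow T}\cR_t(T)=\alpha+1$, recovering Fukasawa's formal value. The only genuinely delicate point — and the one I would flag as the main obstacle — is the tempting shortcut of differentiating the asymptotic $G(\tau)\sim \tau^{\alpha+1}/(\alpha(\alpha+1))$ term-by-term, which is \emph{not} valid in general since $f\sim g$ does not entail $f'\sim g'$. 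Recasting the limit as the ratio $\tau G'/G$ and \emph{integrating} (rather than differentiating) the power-law asymptotic of $\kappa$ sidesteps this pitfall entirely and keeps every step rigorous.
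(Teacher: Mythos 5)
Your proposal is correct and is exactly the computation the paper intends: the paper labels this a ``straightforward corollary'' of the preceding result and omits the details, which amount to substituting $\kappa(u)\sim u^{\alpha-1}$ into $\tau\,\tfrac{d}{d\tau}\log\bigl(\int_0^\tau\!\int_0^s\kappa(u)\,du\,ds\bigr)$ and integrating the power law twice, just as you do. Your extra care in writing the logarithmic derivative as $\tau G'/G$ and integrating (rather than differentiating) the asymptotics is a sound way to make the omitted step rigorous, and the observation that integrability of $\kappa$ near $0$ forces $\alpha>0$ is correctly noted.
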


\begin{remark} In terms of the Hurst parameter $H > 0$, one has $\alpha \equiv H +1/2$, so that in the $t \uparrow T$ limit,
$\cR_t(T) \sim 3/2 + H$, confirming a formal computation of Fukasawa in Remark 2.10 of
 \cite{fukasawa2021volatility}.
In particular, one has, when $T \to t$ and in diffusive situations with $H=2$, skew-stickiness-ratio $2$, cf. Chapter 9 in \cite{bergomi2015stochastic}.

These (asymptotic) results are related to various ``rules'' that relate local and implied volatility skews. In a diffusive setting the rule that the local volatility skew is double the implied volatility skew has been known for a long time \cite{gatheral2006volatility}; this was recently extended to a $H+3/2$-rule in the rough volatility regime $H<1/2$, using a combination of Malliavin calculus and rough analysis \cite{bourgey2023local}.
To appreciate this relation, it suffices to note that asymptotically local variance is the expectation of instantaneous variance conditional on a small move in the underlying (e.g. \cite{demarco2018local} and references therein) and that the short term limit of implied variance is instantaneous variance \cite{durrleman2010implied}, -- precisely consistent with the definition of $\beta_t(T)$.

\end{remark}


\subsection{Computation of $\cR_t(T)$ to next-to-leading order in $\tau$}

The forest expansion \eqref{eq:CGFmF} is effectively a small $\nu$ (vol-of-vol) expansion.  As noted in Chapter 9 of \cite{bayer2023rough}, with $\tau - T-t$ and for fixed $a$, the $\tmF_\ell(\tau)$ scale as $\tau^{\ell\,\alpha+1}$  as $\tau \downarrow 0$. When computing $\cR_t(T)$ using \eqref{eq:RCh}, different powers of $a$ will, after integration, generate different powers of $\tau$.
Thus, to get a small $\tau$ expansion of the SSR, it is not sufficient to go to second order in the forest expansion; we also need to take powers of $a$ into account.  

 $a^n\,\mF^\ell$, when integrated over the positive real line with respect to the Gaussian kernel $e^{-\tfrac1{2} \left(a^2+\tfrac14\right)\,M}$, scales as $\tau^{-(n+1)/2}\,\tau^{\ell\,\alpha+1} = \tau^{\ell\,\alpha-(n-1)/2} $.  Thus, we get the following table of examples of relative scaling of typical terms (dividing by $\sqrt{\tau}$ ):
 
\begin{center}
\begin{tabular}{|c|c|}
\hline
$\M$ &$1$\\
$\MXd$& $\tau^{\alpha}$\\
$\MXdXd$& $\tau^{2 \alpha}$\\
$a^2\,\MMd$& $\tau^{2\alpha-1}$\\
$a^2\,\MXdXd$& $\tau^{2\alpha-1}$\\
$a^2\,(\MXd)^2$& $\tau^{2\alpha}$\\
$a^2\,\M\,\MXd$& $\tau^{\alpha}$\\
$a^4\,(\MXd)^2$& $\tau^{2\alpha-1}$\\
$a^6 \MXd^3$& $\tau^{3\alpha-1}$\\
\hline
\end{tabular}
\end{center}

\noindent Taking the above scaling carefully into account yields the following proposition.

\begin{prop}\label{prop:42}

To next-to-leading order in $\tau=T-t$,
\begin{small}
\begin{align}
&\cR_t(T) =\rho\,M\,\bigg\lbrace \notag\\
& \frac{D^\xi_t
 \M +\tfrac 12\, D^\xi_t\MXd   - \tfrac1{4\,M}\,D^\xi_t\MMd-\frac{1}{M}\, D^\xi_t \MXdXd
 -\tfrac1{4\,M}\,\MXd \,D^\xi_t\M+\tfrac 3{2\,M^2}\,\MXd\,D^\xi_t\MXd  
 + \frac{3}{8 M^2}  D^\xi \M\,\MMd
 +\frac{3}{2 M^2}  D^\xi \M\,\MXdXd  
 -\frac{15}{8 M^3} D^\xi \M \MXd^2
 +...
 }
{
\MXd   + \MXdXd -\tfrac 3{4\,M}\,(\MXd)^2
-\frac{105}{24 M^3} \MXd^3+\frac{15}{8 M^2} \MMd\,\MXd+\frac{15}{2 M^2} \MXd\,\MXdXd-\frac{3}{4 M} \MMdXd-\frac{3 }{2 M}\XMdMd-\frac{3}{M} \MXdXdXd
+...
 }\bigg \rbrace,\notag\\
\label{eq:Rforest1}
\end{align}
\end{small}
where $+...$ denotes higher order terms.

\end{prop}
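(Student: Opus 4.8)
The plan is to feed the forest expansion \eqref{eq:CGFmF} into the master representation \eqref{eq:RCh} and to evaluate the resulting $a$-integrals as Gaussian moment integrals, retaining exactly those contributions that survive to next-to-leading order according to the scaling table above. The backbone of the computation is the observation, already exploited in the proof of Proposition \ref{prop:2ndOrder}, that after the shift $a\mapsto a-\ui/2$ the leading forest exponentiates to the Gaussian weight $e^{\tmF_0(a-\ui/2)}=e^{-\tfrac12(a^2+\tfrac14)\,M}$, against which all remaining polynomial factors in $a$ are integrated. Here $M=M_t(T)$ plays the role of an inverse variance, so each $a$-integral reduces to a Gaussian moment.

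First I would write $\exp\{\psi\}=e^{\tmF_0}\exp\{\tmF_1+\tmF_2+\tmF_3+\cdots\}$ and Taylor-expand the second factor, keeping the products $\tmF_1$, $\tmF_2$, $\tmF_3$, $\tfrac12\tmF_1^2$, $\tmF_1\tmF_2$ and $\tfrac16\tmF_1^3$. The scaling table dictates precisely which of these reach next-to-leading order: because $\tmF_1$ carries a polynomial prefactor of degree three in $a$, the squared and cubed terms $\tmF_1^2$ and $\tmF_1^3$ are promoted by the Gaussian integration and \emph{cannot} be dropped, in contrast to the second-order truncation of Proposition \ref{prop:2ndOrder}. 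The numerator factor $D^\xi_t\psi$ is handled in parallel: since by \eqref{eq:DxiDef} the operator $D^\xi_t$ acts only on the $\xi$-dependent diamond trees and leaves the polynomial prefactors in $a$ untouched, one has $D^\xi_t\psi=\sum_\ell D^\xi_t\tmF_\ell$, and multiplying by the expanded exponential generates exactly the cross-terms between $D^\xi_t\tmF_\ell$ and products of $\tmF_j$ that appear in \eqref{eq:Rforest1}.

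Next I would carry out each $a$-integral. Every even monomial $a^{2m}$ multiplying the Gaussian weight integrates, after restoring the $1/(a^2+\tfrac14)$ factor, to a definite power of $M$ times a double-factorial constant $(2m-1)!!=1,3,15,105,\dots$; the extra factor of $a$ in the denominator of \eqref{eq:RCh} shifts the parity so that its imaginary part is the one that survives. These double factorials, combined with the dyadic prefactors carried by the $\tmF_\ell$ (recall $\tmF_0=-\tfrac12 a(a+\ui)M$ and the recursion \eqref{eq:tmFrecursion}), are the source of the rational coefficients $\tfrac38$, $\tfrac{15}{8}$, $\tfrac{105}{24}$, $\tfrac{3}{2}$, $\tfrac{3}{4}$ appearing in \eqref{eq:Rforest1}. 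Collecting the powers of $M$ and applying the $\tau$-scaling rule $a^n\tmF_\ell\mapsto\tau^{\ell\alpha-(n-1)/2}$ read off from the table, I would discard everything beyond next-to-leading order, factor out the common Gaussian normalization together with $\rho$ and $M$ from numerator and denominator, and read off \eqref{eq:Rforest1}.

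The main obstacle is organizational rather than analytic: one must simultaneously (i) extract real and imaginary parts correctly after the complex shift $a\mapsto a-\ui/2$, which mixes the polynomial prefactors of the various $\tmF_\ell$; (ii) assemble every combinatorial cross-term arising from $\exp\{\tmF_1+\tmF_2+\cdots\}$ and from $D^\xi_t\psi$, including the squared and cubed contributions; and (iii) match each such product to its correct Gaussian moment so that the rational coefficients emerge exactly. Keeping these three layers mutually consistent while pruning with the scaling table, so that only next-to-leading terms in $\tau$ remain, is the delicate part of the argument; once this bookkeeping is organized, the ratio \eqref{eq:Rforest1} follows by direct substitution into \eqref{eq:RCh}.
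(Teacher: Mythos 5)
Your proposal follows essentially the same route as the paper's proof: truncate the forest expansion at $\tilde{\mF}_3$, expand $e^{\psi}$ around the Gaussian factor $e^{-\tfrac12(a^2+1/4)M}$ keeping the products $\tfrac12\tilde{\mF}_1^2$, $\tilde{\mF}_1\tilde{\mF}_2$, $\tfrac16\tilde{\mF}_1^3$, take real and imaginary parts after the shift $a\mapsto a-\ui/2$, prune with the $\tau$-scaling table, and evaluate the surviving $a$-integrals as Gaussian moments whose double-factorial values $1,3,15,105$ produce the rational coefficients. This matches the paper's argument step for step, so the proposal is correct.
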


\begin{proof}

To consistently expand to next-to-leading order in $\tau$, the highest order forest that can contribute is $\tilde \mF_3$.
From \eqref{eq:CGFmF}, to third order in the forest expansion,
\begin{align*}
\psi_t(T;a-\ui/2) &=  -\tfrac12 \left(a^2 +\tfrac14\right)
\big\lbrace \M +(\ui\,a+\tfrac 12)\,\MXd
 - \tfrac1{4} \left(a^2+ \tfrac1{4}\right)\,\MMd-
  (a-\ui/2)^2\,  \MXdXd\\
 &\qquad\qquad\qquad -\tfrac{\ui}{2} (a^2+\tfrac14)\,(a-\ui/2)\,\XMdMd
- \tfrac{\ui}{4} \left(a^2+ \tfrac1{4}\right)\,(a-\ui/2)\,\MMdXd
-\ui\, (a-\ui/2)^3\,\MXdXdXd
\big\rbrace.
\end{align*}
For the numerator, to next-to-leading order in $\tau$, we obtain
\begin{align*}
&\Re\left[e^{\psi_t(T;a-\ui/2)}\,D^\xi_t \psi_t(T;a-\ui/2)\right]=-\frac{1}{2} \left(a^2+\frac{1}{4}\right)\,e^{-\tfrac1{2} \left(a^2+\tfrac14\right)\,M}\,\,\bigg\lbrace\\
 &\qquad \qquad\qquad \qquad \qquad \qquad D^\xi_t
 \M +\tfrac 12\, D^\xi_t\MXd   
 -\frac{a^2}4 D^\xi \MMd
  -a^2 D^\xi \,\MXdXd
  -\frac{1}{4} a^2 D^\xi \M\,\MXd \\
&\qquad \qquad\qquad \qquad \qquad \qquad  + \frac{1}{2} a^4 D^\xi \MXd\,\MXd+ \frac{1}{8} a^4 D^\xi \M\,\MMd
 +\frac{1}{2} a^4 D^\xi \M\,\MXdXd  
 -\frac18 a^6 D^\xi \M \MXd^2+...
\bigg \rbrace.
 \end{align*}

 \noindent Then we do the same for the denominator, including terms up to $\tau^{2\alpha+1}$ or $\tau^{3\,\alpha}$.
 \begin{align*}
&\Im\left[e^{\psi_t(T;a-\ui/2)}\right]=-\frac{1}{2} a \left(a^2+\frac{1}{4}\right)\,e^{-\tfrac1{2} \left(a^2+\tfrac14\right)\,M}\,\,\bigg\lbrace\\
 &\qquad\qquad \MXd   + \MXdXd -\tfrac14\,a^2\,(\MXd)^2 
-\frac{1}{24} a^6 \MXd^3+\frac{1}{8} a^4 \MMd\,\MXd+\frac{1}{2} a^4 \MXd\,\MXdXd-\frac{a^2}{4} \MMdXd-\frac{a^2 }{2}\XMdMd-a^2 \MXdXdXd +...
\bigg \rbrace.
 \end{align*}
Performing the integrations in \eqref{eq:RCh} explicitly yields the result.

\end{proof}


Before proceeding with examples of the application of Proposition \ref{prop:42}, we present a useful lemma that applies only to AFV models, and that incidentally justifies our choice of definition of the operator $D^\xi$.

\begin{lemma}\label{lem:DxiT}
Let $\mT$ be a diamond tree in an affine forward variance (AFV) model.  Then
\begin{align*}
(\mT \dm X)_t(T) = \rho\,\int_t^T\,\xi_t(s)\,D^\xi_s \mT_s(T)\,ds.
\end{align*}
\end{lemma}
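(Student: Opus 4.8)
The plan is to evaluate $(\mT \dm X)_t(T) = \eef{\angl{\mT, X}_{t,T}}$ by first identifying the rate of covariation $d\angl{\mT, X}_s$ and then taking the conditional expectation. Viewing the tree $\mT_s(T)$ as a smooth functional of the forward-variance curve $\{\xi_s(u):u\ge s\}$, its explicit $s$-dependence (the lower integration limit and the kernel shift $\kappa(\cdot - s)$) is of finite variation and does not enter the bracket with $X$; the only martingale contribution comes from $d\xi_s(u) = \kappa(u-s)\sqrt{V_s}\,dW_s$ through the Fr\'echet derivative. Since the martingale part of $X$ is $\sqrt{V_s}\,dZ_s$ and $d\angl{W, Z}_s = \rho\,ds$, functional \IF\ gives
\[
d\angl{\mT, X}_s = \rho\,V_s \left(\int_s^T \frac{\delta \mT_s(T)}{\delta \xi_s(u)}\,\kappa(u-s)\,du\right) ds = \rho\,V_s\,D^\xi_s \mT_s(T)\,ds,
\]
the last equality being the definition \eqref{eq:DxiDef} of $D^\xi$ with $f_s(\xi) = \sqrt{V_s}$ in an AFV model. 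Hence $(\mT \dm X)_t(T) = \rho\,\eet{\int_t^T V_s\,D^\xi_s \mT_s(T)\,ds}$.

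The crux is then to show that $D^\xi_s \mT_s(T)$ is deterministic, which both makes the right-hand side of the claim $\cF_t$-measurable (as it must be) and lets us pull it out of the conditional expectation. I would establish this via the structural fact that, in an AFV model, every diamond tree is a \emph{linear} functional of the forward-variance curve with a deterministic kernel, i.e. $\mT_s(T) = \int_s^T \xi_s(r)\,w_\mT(s,r,T)\,dr$ for some deterministic $w_\mT$, so that $\delta \mT_s(T)/\delta \xi_s(u) = w_\mT(s,u,T)$ and $D^\xi_s \mT_s(T) = \int_s^T \kappa(u-s)\,w_\mT(s,u,T)\,du$ are both deterministic. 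This is proved by induction on the number of diamond products. The base case $M = X \dm X$ gives $M_s(T) = \int_s^T \xi_s(r)\,dr$, since $\angl{X,X}$ has rate $V_s$ and $\xi_s(r)$ is by definition the time-$s$ conditional expectation of $V_r$. For the inductive step, a tree is either $\mT' \dm X$ or $\mT_1 \dm \mT_2$; in each case the single new covariation bracket contributes exactly one factor $V_s = \xi_s(s)$ multiplied by a coefficient assembled from the (deterministic, by the inductive hypothesis) Fr\'echet derivatives of the sub-trees, and taking the conditional expectation with $\eet{V_s} = \xi_t(s)$ returns a functional that is again linear in $\xi_t$ with a deterministic kernel.

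With determinism in hand, the conclusion is immediate: by Fubini for conditional expectations and the martingale property $\eet{V_s} = \xi_t(s)$ of the forward variances,
\[
(\mT \dm X)_t(T) = \rho \int_t^T \eet{V_s}\,D^\xi_s \mT_s(T)\,ds = \rho \int_t^T \xi_t(s)\,D^\xi_s \mT_s(T)\,ds,
\]
as claimed. The main obstacle is the structural determinism claim of the second step; once one grants that diamond trees are linear in the forward curve, the bracket computation and the pull-out are routine. A secondary technical point is justifying the functional \IF\ identification of the martingale part of $\mT_s(T)$ and the interchange of the conditional expectation with the time integral, both of which rest on the integrability assumptions built into Definition \ref{def:diamond}.
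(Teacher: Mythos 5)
Your proof is correct and follows essentially the same route as the paper: the key input in both is that every diamond tree in an AFV model is a linear functional of the forward variance curve with a deterministic kernel, from which $D^\xi_s \mT_s(T)$ is deterministic and the bracket computation plus $\eet{V_s}=\xi_t(s)$ gives the result. The only difference is that you prove this structural fact by induction on the tree, whereas the paper imports it by citing Lemma 4.3 of \cite{friz2022forests} (which gives the slightly stronger convolution form $\mT_t(T)=\int_t^T\xi_t(u)\,h(T-u)\,du$); your argument is therefore a little more self-contained but not materially different.
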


\begin{proof}

From Lemma 4.3 of \cite{friz2022forests}, all diamond trees in an AFV model take the form
\begin{align*}
 \mT_t(T) = \int_t^T\,\xi_t(u)\,h(T-u)\,du,
\end{align*}
for some integrable function $h$.  Then by the definition \eqref{eq:DxiDef} of $D^\xi_t$,
\begin{align*}
D^\xi_t \mT_t(T) = \int_t^T\,\kappa(u-t)\,h(T-u)\,du.
\end{align*}
On the other hand,
\begin{align*}
(\mT \dm X)_t(T) = \rho\,\int_t^T\,\xi_t(s)\,ds\,\int_s^T\,\kappa(u-s)\,h(T-u)\,du.
\end{align*}

\end{proof}

\begin{remark}
It is straightforward to check, for example by examining explicit computations under rough Bergomi in Chapter 9 of \cite{bayer2023rough},  that Lemma \ref{lem:DxiT} does not hold in general for models that are not affine.
\end{remark}

\subsubsection{Example: Classical Heston with $\lambda=0$ and $\xi_t(u) = V$}

In this (probably simplest possible) case, we have
\begin{align*}
M=\M &= V\,\tau\\
\MXd &= \tfrac12\,\rho\,\nu\,V\,\tau^2\\
\MMd &= \tfrac13\,\nu^2\,V\,\tau^3\\
\MXdXd &= \tfrac16\,\rho^2\,\nu^2\,V\,\tau^3\\
\XMdMd &= \tfrac18\,\rho\,\nu^3\,V\,\tau^4\\
\MMdXd &=\tfrac1{12}\,\rho\,\nu^3\,V\,\tau^4\\
\MXdXdXd &=\tfrac1{24}\,\rho^3\,\nu^3\,V\,\tau^4.
\end{align*}
Also, since the forward variance curve is flat and the model is affine, $D^\xi \equiv \nu\,\p_V$ so
\begin{align*}
D^\xi \M &=\nu\,\tau\\
D^\xi  \MXd &= \tfrac12\,\rho\,\nu^2\,\tau^2\\
D^\xi \MMd &= \tfrac13\,\nu^3\,\tau^3\\
D^\xi  \MXdXd &= \tfrac16\,\rho^2\,\nu^3\,\tau^3.
\end{align*}

\noindent Substituting into \eqref{eq:Rforest1}, we obtain
%
\begin{align*}
\cR_t(T) =2\,\frac{1+\tfrac{1 }{8}\,\nu  \rho  \tau+\tfrac 1 {24}\,\tfrac{\nu^2\,\tau}{V}- \tfrac{1}{96}\,\tfrac{\rho^2 \nu^2\tau}{V}+\cO(\tau^2)}
{
1-\tfrac 1{24}\,\rho\,\nu  \tau +\tfrac 1 8\,\tfrac{\nu^2\,\tau}{V}-\tfrac 3{32}\,\tfrac{\rho^2 \nu^2\,\tau}{V} +\cO(\tau^2) },
\end{align*}
which agrees exactly with the exact expansion of the closed-form classical solution.

\subsubsection{Example: Rough Heston with $\lambda=0$ and $\xi_t(u) = V$}

In the rough Heston case, we don't have a closed-form expression for the characteristic function but we do have rational approximations \cite{gatheral2019rational, gatheral2024generalization}.  The diamond trees required to implement \eqref{eq:Rforest1} are straightforward to compute and are given by (see page 193 of \cite{bayer2023rough} for example):
\begin{align*}
M=\M &= V\,\tau\\
\MXd &= \frac{\rho\,\nu}{\Gamma(2+\alpha)}\,V\,\tau^{\alpha+1}\\
\MMd &=\frac{\nu^2}{\Gamma(1+\alpha)^2}\,V\,\frac {\tau^{2 \alpha+1}}{2 \alpha+1}\,\\
\MXdXd &= \frac{\rho^2 \nu^2}{\Gamma(2+2 \alpha)}\,V\,\tau^{2 \alpha+1}\\
\XMdMd &= \frac{\rho\,\nu^3}{\Gamma(1+\alpha)\,\Gamma(1+2 \alpha)}\,V\,\frac{\tau^{3\alpha+1}}{3\alpha+1}\\
\MMdXd &=\frac{\rho\,\nu^3\,\Gamma(1+2\alpha)}{\Gamma(1+\alpha)^2\Gamma(2+3\alpha)}\,V\,\tau^{3\alpha+1}\\
\MXdXdXd &=\frac{\rho^3\,\nu^3}{\Gamma(2+3\alpha)}\,V\,\tau^{3\alpha+1}.
\end{align*}

\noindent Following the notation of \cite{bayer2023rough}, let
$$
I_t^{(j)}(T) = \int_t^T\,\xi_t(s)\,(T-s)^{j \alpha}\,ds.
$$
Then from the definition \eqref{eq:DxiDef} of $D^\xi_t$, with $\kappa(\tau) =  \frac{\nu\,\tau^{\alpha-1}}{\Gamma(\alpha)}$,
\begin{align*}
D^\xi_t I_t^{(j)}(T) &= \frac{1}{\sqrt{V_t}}\,\int_t^T \,du\,f_t(\xi)\,\kappa(u-t)\,\frac{\delta }{\delta \xi_t(u)} I_t^{(j)}(T)\\ &=  \frac{\nu}{\Gamma(\alpha)}\,\int_t^T \,(u-t)^{\alpha-1}\,(T-u)^{j \alpha}\,du\\
&= \frac{\Gamma (1+j \alpha)}{\Gamma (1+(j+1) \alpha)}\, \tau^{(j+1)\alpha }.
\end{align*}
It follows that
\begin{align*}
D^\xi \M &=\frac{\nu}{\Gamma(1+\alpha)}\,\tau^\alpha\\
D^\xi  \MXd &= \frac{\rho\,\nu^2}{\Gamma(1+2\alpha)}\,\tau^{2 \alpha}\\
D^\xi \MMd &=\frac{\nu^3}{\Gamma(1+\alpha)^2}\,\frac{\Gamma(1+2 \alpha)}{\Gamma(1+3 \alpha)}\,\tau^{3 \alpha}\\ 
D^\xi  \MXdXd &= \frac{\rho^2 \nu^3}{\Gamma(1+3 \alpha)}\,\tau^{3 \alpha}.
\end{align*}

\begin{remark} Note the general pattern, true only for AFV models when the forward variance curve is flat, whereby for a given tree $\mT$,
$\rho\,V_t\,D^\xi_t \mT_t(T) = \p_\tau \, (\mT \dm X)_t(T)
$.  %
This is a straightforward corollary of Lemma \ref{lem:DxiT}.
\end{remark}

In Figure \ref{fig:SSR15}, with $\xi = 0.025$, $\nu = 0.4$, and $\rho=-.8$, we plot the SSR for various values of $H$, numerically using \eqref{eq:RAFV} and using the forest formula \eqref{eq:Rforest1} with the above substitutions.  We observe clear consistency of the forest formula with the numerical computations.  However, in the practically relevant case of $H \lessapprox 0.2$, the timescale over which these two agree is arguably too short to be of any practical interest.

\begin{figure}[h!]
\begin{center}
\includegraphics[width = 1.0 \linewidth]{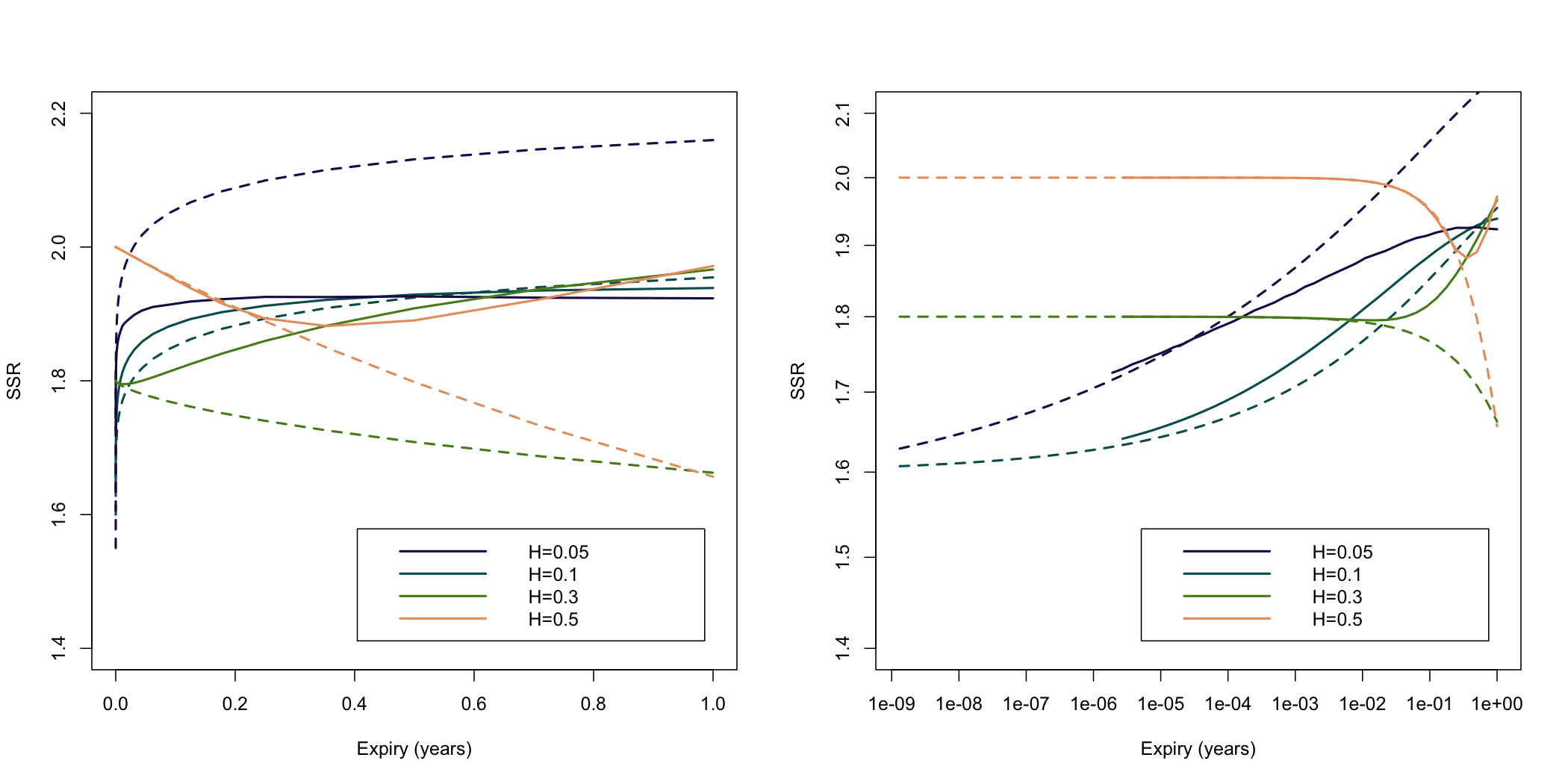}
\caption{The rough Heston SSR with a flat forward variance curve $\xi = 0.025$, and parameters $\nu = 0.4;\,\rho=-.8$.  Solid lines are the numerical computation \eqref{eq:RAFV} and dashed lines are the forest formula \eqref{eq:Rforest1}.  On the right is a log-log plot.  }
\label{fig:SSR15}
\end{center}
\end{figure}


\section{Dependence of $\cR_t(T)$ on $\xi_t(u)$ and path-dependence}\label{sec:pathdep}

We may rewrite the leading order expression \eqref{eq:SSRleadingOrder2} suggestively in the form
\begin{align}
\cR_t(T)
 &= \frac{\left(\int_t^T\,\xi_t(s)\,ds\right)\,\int_t^T  \,\sqrt{V_t}\,f_t(\xi)\,\kappa(u-t)\,du}
 {V_t\,\int_t^T \,ds\,\int_s^T\,\eet{\sqrt{V_s}  \,f_s(\xi)}\,\kappa(u-s)\,du}.
 \label{eq:SSRsuggestive}
\end{align}
We first observe that \eqref{eq:SSRsuggestive} should be rather insensitive to the level of the forward variance curve.  On the other hand, $\cR_t(T)$ is clearly sensitive to the shape of $\xi_t(u)$.  A monotonic increasing forward variance curve will cause the SSR to increase relative to the flat curve case, and vice versa.

In the AFV case,  \eqref{eq:SSRsuggestive} simplifies to 
$$
\mathcal{R}_t(T)  
=\frac{\left(\int_t^T\,\xi_t(s)\,ds\right)\,\tilde \kappa(T-t)}{\int_t^T\,\xi_t(s)\,\tilde \kappa(T-s)\,ds},
$$
where $\tilde \kappa(\tau) = \int_0^\tau\,\kappa(s)\,ds$.  If the forward variance curve is flat with $\xi_t(u) = \bar V$, the SSR does not depend on the level $V$ at all!
However, once again, $\cR_t(T)$ does depend on the shape of $\xi_t(u)$.

In Figure \ref{fig:SSRcrv}, with parameters $H=0.1;\,\nu = 0.4;\,\rho=-.8$ we plot the rough Heston SSR using the AFV formula \eqref{eq:RAFV}, and assuming three different forward variance curves of the form:
\begin{align*}
\xi_t(u) = (V_t - \bar V)\,e^{-\lambda t} + \bar V,
\end{align*}
with $\bar V = 0.025,\, \lambda = 7$ and $V_t = 0.025$ (flat), $V_t = 0.005$ (contango), and $V_t = 0.045$ (backwardation).  We see that in fact, the SSR is very sensitive to the shape of the forward variance curve.

\begin{figure}[h!]
\begin{center}
\includegraphics[width = 1.0 \linewidth]{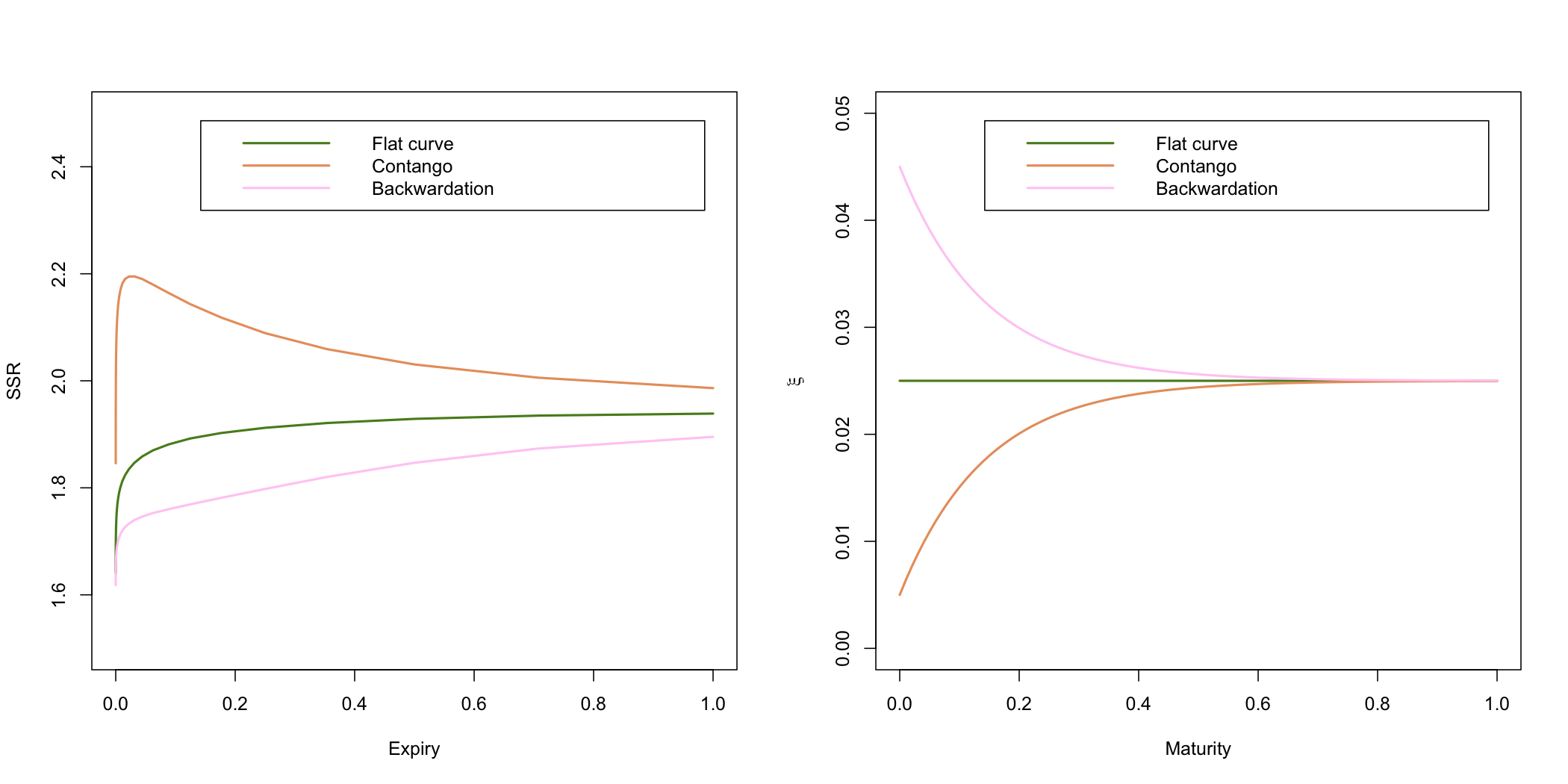}
\caption{The rough Heston SSR with various forward variance curves, and parameters $H=0.1;\,\nu = 0.4;\,\rho=-.8$.  The left-hand plot is of the SSRs and the right-hand plot shows the assumed forward variance curves.  }
\label{fig:SSRcrv}
\end{center}
\end{figure}

\subsection{Path-dependence of the SSR}

In an affine forward variance (AFV) model, with $dW_t = \rho\,dZ_t + \sqrt{1-\rho^2}\,dZ^\perp_t$, for each $L>0$, we may write
\begin{align*}
\xi_t(u) &=\xi_{-L}(u)+ \int_{-L}^t\,\kappa(u-r)\,\sqrt{V_r}\,dW_r.
\end{align*}
Formally taking the limit $L \to \infty$, we obtain
\begin{align*}
&=\bar \xi + \int_{-\infty}^t\,\kappa(u-r)\,\sqrt{V_r}\,dW_r\\
&=\bar \xi +\rho\, \int_{-\infty}^t\,\kappa(u-r)\,\frac{dS_r}{S_r} + \text{independent noise}.
\end{align*}
The forward variance curve depends on a weighted average of historical stock returns.  Indeed when the correlation $\rho = \pm 1$ (a purely path-dependent model in the terminology of \cite{guyon2023volatility}), the forward variance is a functional of this weighted average of historical stock returns.  We deduce that $\cR_t(T)$ depends on weighted average historical stock returns.   Specifically, if recent returns are very negative, we expect the forward variance curve to be backwardated, lowering the SSR, and vice versa.

This argument goes through for every forward variance model -- the forward variance curve is a noisy transform of the historical series of stock returns.

\section{Sensitivity of the SSR to model dynamics}\label{sec:sensitivity}

Recall from \cite{gatheral2019affine} that the rough Heston kernel takes the form
$$
\kappa(\tau) = \nu\,\tau^{\alpha-1}\,E_{\alpha,\alpha}\left(-\lambda\,\tau^\alpha\right),
$$
where $E_{\alpha,\alpha}(\cdot)$ is the Mittag-Leffler function and $\alpha=H+\tfrac12$.  
In order to demonstrate the sensitivity of the SSR to model dynamics, we choose three values of the parameter $\lambda\, (= 0,1,2)$ and find values of $H$ and $\nu$ such that the resulting parameter sets $\Pi_0$, $\Pi_1$ and $\Pi_2$ (listed in Table
\ref{tab:params}) generate the almost identical 1-month, 3-month, 6-month, and 12-month smiles shown in Figure \ref{fig:FourSmiles3}.

\begin{table}[h]
\centering
\begin{tabular}{|c|c|c|c|}
\hline
 & $\Pi_0$& $\Pi_1$ & $\Pi_2$ \\ \hline
$\lambda$ & $0$ & $1$ & $2$ \\ \hline
$H$ & $0.10$ & $0.223$ & $0.302$ \\ \hline
$\nu$ & $0.40$ & $0.481$ & $0.647$ \\ \hline
$\rho$ & $-0.65$ & $-0.65$ & $-0.65$ \\ \hline
\end{tabular}
\caption{Parameter values for the different parameter sets.}
\label{tab:params}
\end{table}

\begin{figure}[h!]
\begin{center}
\includegraphics[width = 0.95 \linewidth]{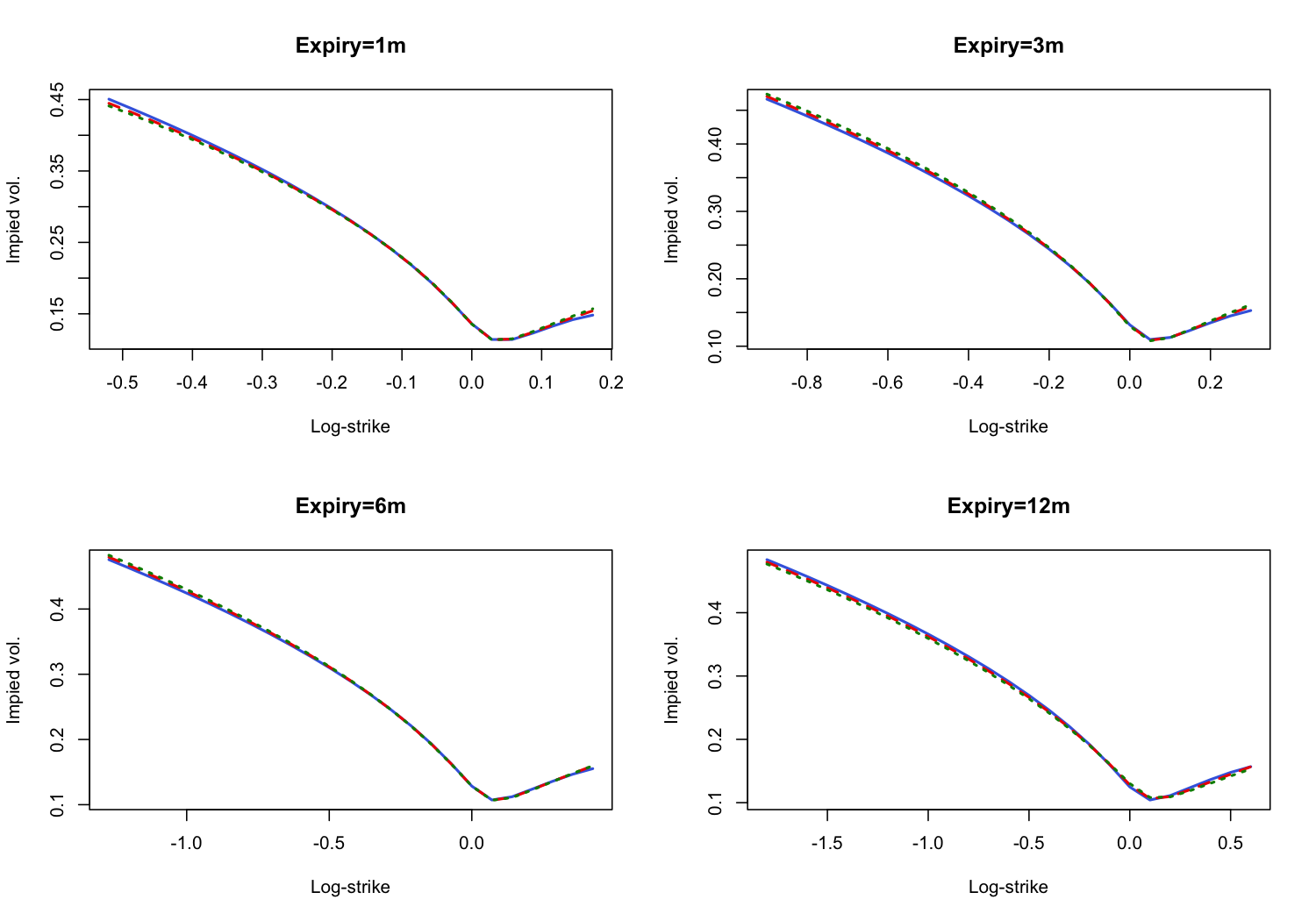}
\caption{Almost identical rough Heston smiles generated by the parameter sets $\Pi_0$, $\Pi_1$, and $\Pi_2$, in blue, red and green respectively.}
\label{fig:FourSmiles3}
\end{center}
\end{figure}

In Figure \ref{fig:SSR3}, we plot the SSR, computed using the affine SSR formula \eqref{eq:RAFV}, for each of these parameter sets as a function of time to expiry.  We see that although all three parameter sets generate very similar smiles, the SSR plots are very different.  We conclude that the SSR is highly senstive to dynamical assumptions; it cannot be deduced from the shape of the implied volatility surface.

\begin{figure}[h!]
\begin{center}
\includegraphics[width = 0.8 \linewidth]{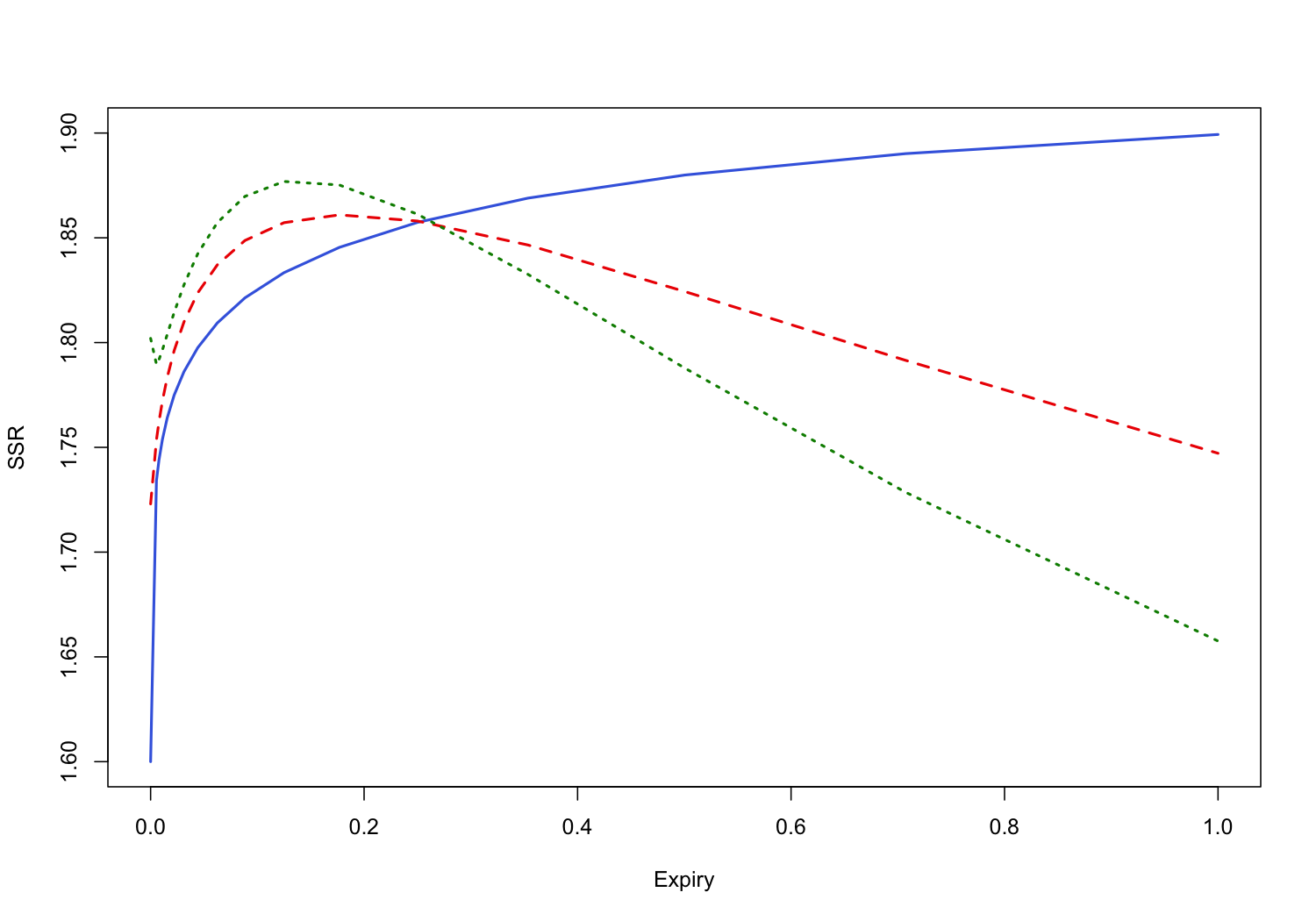}
\caption{The blue, red, and green SSR plots correspond to parameter sets $\Pi_0$, $\Pi_1$, and $\Pi_2$ respectively.  }
\label{fig:SSR3}
\end{center}
\end{figure}

\section{Concluding remarks}\label{sec:conclusion}

To conclude, the key contributions of this paper are twofold.  Firstly, Proposition \ref{prof:SSRandCF}gives a generic and essentially model-free expression for the SSR in terms of the characteristic function.
Secondly, Proposition \ref{prop:SSRaffine} gives an explicit formula for the
SSR in affine forward variance (AFV) models, in terms of the solution of a convolution Riccati integral equation.

By using the forest expansion of the characteristic function, we were able to derive a model-free characterization of the short-time behavior of the SSR in Lemma \ref{lem:leadingorder}.  Corollary \ref{cor:alphaplus1} then confirms various prior formal claims on the limiting behavior of the SSR.

It is important to highlight that observed skew stickiness ratios are typically between 1.0 and 1.3, seemingly inconsistent with our computations in affine models as illustrated in Figure \eqref{fig:SSR3}.  However, we have also established the the SSR is highly dependent on assumed model dynamics and so it is always possible that there exists a model that is consistent with empirical observation. Preliminary simulation results of \cite{bourgey2024skew} suggest that the observed SSR is inconsistent with the most popular rough stochastic volatility models.  This inconsistency remains, in our view, an important outstanding problem. 

In trying to understand where this inconsistency comes from, one naturally asks the question \footnote{A question asked by Eric Reiner in particular.} whether or not this could be related to discreteness in the operational definition of the SSR as opposed to the mathematical definition given by \eqref{eq:SSR} and \eqref{equ:beta}.  We show in Appendix \ref{sec:discreteness} that the discreteness effect cannot explain the large discrepancy between the observed and computed SSRs.

Is is worth noting that within the class of rough volatility models with Hurst parameter $H$, the most reasonable SSR is obtained by taking $H$ very small. Across different rough volatility models (rough Heston, rough Bergomi, quadratic rough Heston) the authors of \cite{bourgey2024skew} have indicated to us that the most realistic term structures of the SSR are generated by the rough Bergomi and quadratic rough Heston with $H$ very small.

\bibliographystyle{alpha}
\bibliography{RoughVolatility}

\begin{appendix}

\section{Discreteness effect}\label{sec:discreteness}

In practice, we do not estimate the instantaneous covariance $d \angl{\sigma(T),X}_t$. 
In other words, we have computed what happens to ATM volatility over an infinitesimally short time when the spot moves.  Consider the change in ATM volatility over a short but not infinitesimal period $\Delta$ instead.  The corresponding definition of the regression coefficient would be
\beas
\beta^\Delta_t(T) &= &   \frac{\eetm{\int_t^{t+\Delta}\,d\angl{\sigma(T),X}_s}{\mP}}
{\eetm{\int_t^{t+\Delta}\,d\angl{X}_s}{\mP}},
\eeas
where $\mP$ denotes the physical measure.
The denominator is just integrated variance, estimated in practice by realized variance.  The numerator is estimated by the realized covariance of at-the-money volatility and log-spot.  Both the numerator and the denominator then obviously depend on the choice of measure.

\subsection{Approximation for AFV models}

Consider the following simple change of measure, consistent with, for example, benchmark pricing:
\beas
dZ_t^\mQ = dZ_t^\mP + \sqrt{V_t}\,dt,
\eeas
where $\mQ$ denotes the pricing measure.
For $s \in (0, \Delta)$, and assuming dynamics \eqref{eq:SVmodel}, we make the approximation
\begin{align*}
d\sigma_s(T)^{\mQ}\approx \frac{1}{2\,\sigma\,\tau}\,dM^{\mQ}_s(T) = \frac{1}{2\,\sigma\,\tau}\,\int_s^T\,f_s(\xi)\,\kappa(u-s)\,du\,dW_s^{\mQ},
\end{align*}
where $\sigma$ denotes $\sigma_t(T)^\mQ$ and $M_t(T) = (X \dm X)_t(T) = \int_t^T\,\xi_t(u)\,du$.
Write 
$
dW^\mP_t = \rho\,dZ^\mP_t + \bar \rho\,dZ^{\perp\,\mP}_t
$
and assume the risk associated with $dZ^\mP_t$ is not priced.  Then
$dW_t^{\mQ} = dW_t^{\mP}+\rho\,\sqrt{V_t}\,dt$ so that
\begin{align*}
\eetm{d\angl{\sigma(T),X}_s}{\mP} \approx \frac{\rho}{2\,\sigma\,\tau}\,\int_s^T\,\eetm{\sqrt{V_s}\,f_s(\xi)}{\mP}\,\kappa(u-s)\,du.
\end{align*}
In the AFV case, $f_s(\xi) = \sqrt{V_s}$ and we get
\begin{align*}
\eetm{d\angl{\sigma(T),X}_s}{\mP} \approx \frac{\rho}{2\,\sigma\,\tau}\,\int_s^T\,\eetm{V_s}{\mP}\,\kappa(u-s)\,du.
\end{align*}
Thus, in the AFV case,
\begin{align}
\beta^\Delta_t(T) &=    \frac{\eetm{\int_t^{t+\Delta}\,d\angl{\sigma(T),X}_s}{\mP}}
{\eetm{\int_t^{t+\Delta}\,d\angl{X}_s}{\mP}}\approx 
\frac{\rho}{2\,\sigma\,\tau}\,\frac{\int_t^{t+\Delta}\,ds\,\int_s^T\,\eetm{V_s}{\mP}\,\kappa(u-s)\,du}{\int_t^{t+\Delta}\,ds\,\int_s^T\,\eetm{V_s}{\mP}\,du}.
\label{eq:betaAFVApprox}
\end{align}
In the case of interest, $ \frac{\Delta }{\tau} \ll 1$, it is immediately apparent from expression \eqref{eq:betaAFVApprox} that the effect of the choice of measure and $\Delta$ on $\beta^\Delta_t(T)$ is very small.

\subsubsection{An explicit estimate}
To quantify the size of this effect, consider the special case where $\eetm{V_s}{\mP}=V_t$, a constant.  Then,
\begin{align}
\frac{\beta^\Delta_t(T)}{\beta_t(T)}=\frac{\int_0^{\Delta}\,ds\,\int_s^\tau\,\kappa(u-s)\,du}{\int_0^{\Delta}\,ds\,\int_s^\tau\,du}\,
\frac{\int_0^\tau\,du}
{\int_0^\tau\,\kappa(u)\,du}
\label{eq:betadelover beta}
\end{align}
Noting that
\beas
\int_s^\tau\,\kappa(u-s)\,du = \int_0^{\tau-s}\,\kappa(u)\,du
 \leq \int_0^{\tau}\,\kappa(u)\,du,
\eeas
we obtain
\begin{align*}
\frac{\beta^\Delta_t(T)}{\beta_t(T)}&=\frac{1}{1 -\frac12 \frac{\Delta}{\tau} }\,
\frac{\frac 1 \Delta\,\int_0^{\Delta}\,ds\,\int_s^\tau\,\kappa(u-s)\,du}{\int_0^\tau\,\kappa(u)\,du} 
\leq \frac{1}{1 -\frac12 \frac{\Delta}{\tau} }.
\end{align*}

Thus, in practice, when we estimate the 1-month SSR using daily data, the maximum error relative to the instantaneous estimate \eqref{eq:beta} should be of the order of $2.5\%$, even less for longer expirations.  
If in addition, we were to adjust the denominator of \eqref{eq:SSR} by taking the average skew over the interval $[0,\Delta]$, the discreteness error in $\cR_t(T)$ would be even smaller.

In the realistic case where $\kappa(\tau) = A\,\tau^{-\gamma}$, we may evaluate \eqref{eq:betadelover beta} exactly to give 
\begin{align*}
\frac{\beta^\Delta_t(T)}{\beta_t(T)}&=\frac{1}{1-\frac{\gamma }{2} }\,\frac{1-(1-\epsilon )^{2-\gamma }}{1-(1-\epsilon )^2} = 1+\frac{\gamma  \epsilon }{2}+\cO\left(\epsilon ^2\right),
\end{align*}
where $\epsilon  = \frac \Delta \tau$.  With $\gamma=1/2$, we get half of the upper bound evaluated above.
\end{appendix}

\end{document}